\theoremstyle{definition}
\newtheorem{theorem}{Theorem}[section]
\newtheorem{corollary}[theorem]{Corollary}
\newtheorem{lemma}[theorem]{Lemma}
\theoremstyle{remark}
\def\C{\mathcal{C}} 
\def\cD{\mathcal{D}} 
\def\Oh{O} 
\def\centerarc[#1](#2)(#3:#4:#5)
\def\nodearc(#1)(#2)(#3:#4)[#5](#6)
\def\nodeellipse(#1)(#2)(#3:#4:#5)[#6](#7)
\tikzset{
  hatch size/.store in=\hatchsize,
  hatch angle/.store in=\hatchangle,
  hatch line width/.store in=\hatchlinewidth,
  hatch size=5pt,
  hatch angle=0pt,
  hatch line width=.5pt,
}
\def\dash{10pt}
\tikzset{
    dashone/.style={dash pattern=on 5pt off 15pt},
  }
\newcommand{\colora}{Gray}
\newcommand{\colorb}{SkyBlue}
\newcommand{\colorc}{RawSienna}
\newcommand{\colord}{YellowOrange}
\newcommand{\colore}{WildStrawberry}
\newcommand{\colorf}{ForestGreen}
\newcommand{\colorg}{Violet}
\title{An algorithmic Vizing’s theorem: toward efficient edge-coloring sampling with an optimal number of colors}
\author[1]{Lucas De Meyer}
\author[2,3]{František Kardoš}
\author[4]{Aurélie Lagoutte}
\author[5,6]{Guillem Perarnau}
\affil[1]{Universite Claude Bernard Lyon 1, CNRS, INSA Lyon, LIRIS, UMR 5205, 69622 Villeurbanne, France}
\affil[2]{Univ. Bordeaux, CNRS, Bordeaux INP, LaBRI, UMR 5800, 33400 Talence, France}
\affil[3]{Departement of Computer Science, Comenius University, Bratislava, Slovakia}
\affil[4]{Univ. Grenoble Alpes, CNRS, Grenoble INP, G-SCOP, 38000 Grenoble, France}
\affil[5]{Universitat Polit\`ecnica de Catalunya (UPC), Barcelona, Spain.}
\affil[6]{Centre de Recerca Matem\`atica, Bellaterra, Spain.}
\date{January 20, 2025}
\begin{document}

\maketitle

\footnotetext[0]{This work arise from a workshop supported by ANR project GrR (ANR-18-CE40-0032).
FK has been supported by APVV grant APVV-23-0076 and by VEGA grant VEGA 1/0727/22. 
GP has been supported by the grants RED2022-134947-T, PID2023-147202NB-I00 and CEX2020-001084-M, all of them funded by MICIU/AEI/10.13039/501100011033.}

\begin{abstract}
The problem of sampling edge-colorings of graphs with maximum degree $\Delta$ has received considerable attention and efficient algorithms are available when the number of colors is large enough with respect to $\Delta$. Vizing's theorem guarantees the existence of a $(\Delta+1)$-edge-coloring, raising the natural question of how to efficiently sample such edge-colorings. In this paper, we take an initial step toward addressing this question. Building on the approach of Dotan, Linial, and Peled~\cite{Dotan20}, we analyze a randomized algorithm for generating random proper $(\Delta+1)$-edge-colorings, which in particular provides an algorithmic interpretation of Vizing's theorem. The idea is to start from an arbitrary non-proper edge-coloring with the desired number of colors and at each step, recolor one edge uniformly at random provided it does not increase the number of conflicting edges (a potential function will count the number of pairs of adjacent edges of the same color). We show that the algorithm almost surely produces a proper $(\Delta+1)$-edge-coloring and propose several conjectures regarding its efficiency and the uniformity of the sampled colorings.
\end{abstract}

\section{Introduction}

The problem of sampling discrete combinatorial structures is one of the central topics in modern Theoretical Computer Science, with close connections to Discrete Mathematics, Probability Theory and Statistical Physics. A first motivation stems from the analysis of complex probability distributions, intractable from the analytic point of view, but that can be generated via Markov Chain Monte Carlo (MCMC) algorithms~\cite{Mar,saloff1997lectures} whose transitions are given by local transformations. This is intimately related to the topic of mixing times in Markov Chains. Another motivation comes from the correspondence between approximate counting and approximate sampling; thus, an efficient almost-uniform sampler gives rise to an efficient approximation scheme, and this is how many of the strongest approximate counting results are obtained. Exact counting of such discrete structures is often a computationally hard algorithmic problem (\#P-hard).

A paradigmatic instance is sampling proper vertex-colorings of a graph, for which there is an extensive and rich literature. Given a graph $G$ on $n$ vertices and maximum degree $\Delta=\Delta(G)$, the aim is to design an efficient algorithm that samples proper $k$-colorings of $G$ according to a distribution that is uniform, or close to it (according to a certain distance). Here, efficient means that it is polynomial in $n$ and $\log{(1/\varepsilon)}$, where $\varepsilon>0$ is a bound on the distance between the distribution of the sample and the uniform one. 
MCMC algorithms, such as Glauber dynamics, have demonstrated an excellent performance for this problem~\cite{Chen19,Jerrum95,Vigoda00}, and to date, they outperform the best deterministic approaches to approximate the counting of proper colorings~\cite{Liu19}. It is conjectured that the dynamics can efficiently sample random colorings as long as $k\geq \Delta+2$, beyond this point the chain fails to be ergodic. Currently, this is known for $k\geq 1.809\Delta$~\cite{Carlson24}.

The case of edge-colorings, which is the subject matter of this article, is much less studied. Proper edge-colorings are fundamental structures in graph theory, and many combinatorial designs can be represented using edge colorings, for instance: (i) \emph{Latin squares} of size $n$ correspond to proper $n$-edge-colorings of the complete bipartite graph $K_{n,n}$, (ii) \emph{one-factorizations} of size $2n-1$ correspond to $(2n-1)$-edge-colorings of the complete graph $K_{2n}$, and (iii)
\emph{resolvable Steiner systems} with parameters $(r,r,n)$ correspond to proper edge-colorings of the complete $r$-uniform hypergraph $K_n^r$. The quest for efficient generation of random combinatorial designs comes in hand with the efficient sampling of proper edge-colorings.

Vizing's theorem~\cite{Vizing65} states that the minimum number of colors required to properly color the edges of a graph, is either $\Delta$ or $\Delta+1$ colors. Accordingly, this splits the class of all graphs into Class~I and Class~II graphs, and it is NP-hard to decide in which class a  graph is~\cite{holyer1981np}. In particular, it poses the natural question of whether there is an efficient procedure to sample proper $k$-edge-colorings for $k\geq \Delta+1$, which we call the \emph{algorithmic Vizing's problem}. (Note that, in Class I, the dynamics is not ergodic for $k=\Delta$ as the colors of the edges around a vertex of maximum degree are frozen.) 

Again, the MCMC approach is a natural candidate. Wang, Zhang and Zhang~\cite{Wang23} showed that the Glauber dynamics is an efficient sampler for proper $k$-edge-colorings provided that $k\geq (2+\varepsilon)\Delta$, for any $\varepsilon>0$. In other words, it is an efficient sampler for proper $k$-vertex-colorings of any \emph{line} graph $H$, if $k\geq (1+\varepsilon)\Delta(H)$. This improves upon the general bound~\cite{Carlson24} for graphs in this hereditary class. Remarkably, the result is asymptotically optimal: the Glauber dynamic on $k$-edge-colorings of Kneser graphs with $k= 2\Delta-1$ is not ergodic~\cite{Knesser}, see~\Cref{fig:kneser}. 

\begin{figure}[hbtp]
        \begin{center}
        \tikzstyle{vertex}=[circle,draw, minimum size=7pt, scale=0.7, inner sep=1pt, fill = black]
        \tikzstyle{fleche}=[->,>=latex]
        \tikzstyle{labell}=[opacity=0,text opacity=1, scale =1.2]
        \begin{tikzpicture}[scale=0.8]

        \node (a1) at (90:2cm) [vertex] {};
        \node (a2) at (90:1cm) [vertex] {};
        \node (a3) at (162:2cm) [vertex] {};
        \node (a4) at (162:1cm) [vertex] {};
        \node (a5) at (234:2cm) [vertex] {};
        \node (a6) at (234:1cm) [vertex] {};
        \node (a7) at (306:2cm) [vertex] {};
        \node (a8) at (306:1cm) [vertex] {};
        \node (a9) at (18:2cm) [vertex] {};
        \node (a10) at (18:1cm) [vertex] {};
        
        \draw[line width= 1.5, \colorb] (a1) to (a3);
        \draw[line width= 1.5, \colorf] (a1) to (a9);
        \draw[line width= 1.5, \colorc] (a3) to (a5);
        \draw[line width= 1.5, \colord] (a5) to (a7);
        \draw[line width= 1.5, \colorg] (a7) to (a9);
        
        \draw[line width= 1.5, \colord] (a1) to (a2);
        \draw[line width= 1.5, \colorg] (a3) to (a4);
        \draw[line width= 1.5, \colorf] (a6) to (a5);
        \draw[line width= 1.5, \colorb] (a8) to (a7);
        \draw[line width= 1.5, \colorc] (a10) to (a9);
        
        \draw[line width= 1.5, \colorg] (a6) to (a2);
        \draw[line width= 1.5, \colorc] (a2) to (a8);
        \draw[line width= 1.5, \colord] (a4) to (a10);
        \draw[line width= 1.5, \colorf] (a8) to (a4);
        \draw[line width= 1.5, \colorb] (a10) to (a6);

        \end{tikzpicture}
        \end{center}
        \caption{A frozen $(2\Delta - 1)$-edge-coloring of a Kneser graph}
        \label{fig:kneser}
\end{figure}
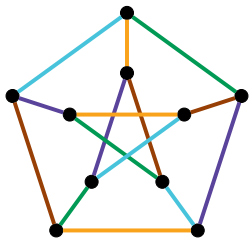

In order to break the $2\Delta$ barrier, there are two approaches: either we restrict the class of graphs of interest, or, we devise a new algorithm that bypasses the ergodicity issue in Glauber dynamics. Within the first line of research, Delcourt, Heinrich and Perarnau~\cite{Delcourt20} proved that the dynamics efficiently generates proper $k$-edge-coloring of trees for $k\geq \Delta+1$; see also \cite{Carlson2024b} for a more efficient algorithm. 

Within the second line of research, very recently, Carlson et al.~\cite{Carlson2024b} posed the question of whether there is an MCMC algorithm (or any other method) to sample proper edge-colorings using less than $2\Delta$ colors, or, alternatively, whether there is a hardness of approximate counting result in this regime. Here the main obstacle to overcome is the constrained nature of edge-colorings with few colors; indeed local transformations might not even be possible without violating the constraints. A good example is the MCMC algorithm of Jacobson and Matthews~\cite{JacMat} to sample Latin squares (equivalently proper $n$-edge-colorings of $K_{n,n}$), which outputs an almost uniform random Latin square by navigating through a much larger set of combinatorial objects; up to this date, it is not known if this algorithm is efficient. 

Dota, Linial and Peled~\cite{Dotan20} proposed an algorithm to generate uniform random one-factorizations of the complete graph with an even number of vertices. The \emph{randomized hill-climbing method} is based on a Markov Chain on edge-partitions (here seen as edge-colorings) whose trajectories are governed by local transformations that decrease a defined potential, whose minimum value is attained at one-factorizations (here, proper edge-colorings). 
The goal of this paper is to get further insights on this method and extend its applicability to general graphs.

For any (non-necessarily proper) $k$-edge-coloring $\sigma:E(G)\to [k]=\{1,\dots, k\}$, define the \emph{potential} $\psi$ of $\sigma$ as
\begin{equation}\label{eq:potential}
\psi(\sigma) = \sum_{v\in V(G)}\sum_{\alpha\in [k]} \binom{d_{\sigma,\alpha}(v)}{2},
\end{equation}
where $d_{\sigma,\alpha}(v)=|\{u\in V(G): uv\in E(G), \sigma(uv)=\alpha\}|$ is the \emph{$\alpha$-degree} of $v$ in $\sigma$. In other words, it is the number of pairs of adjacent edges that are equally colored (see \Cref{fig:pot}). Indeed, the minimum value of this potential is $0$ and it is attained if and only if the edge-coloring is proper.

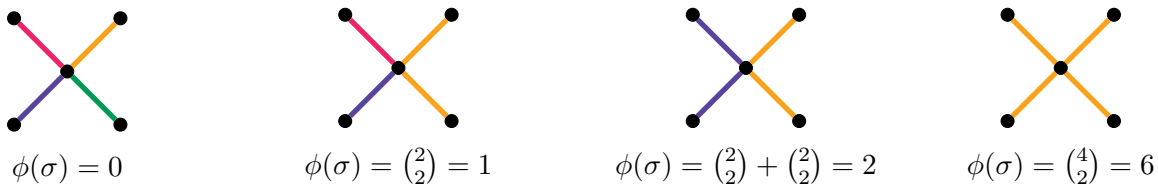
\begin{figure}[hbtp]
        \tikzstyle{vertex}=[circle,draw, minimum size=7pt, scale=0.7, inner sep=1pt, fill = black]
        \tikzstyle{fleche}=[->,>=latex]
        \tikzstyle{labell}=[opacity=0,text opacity=1, scale =1.1]

        \begin{subfigure}[t]{.2\textwidth}
            \centering
            \begin{tikzpicture}[scale=1]
                \node (a1) at (0, 0) [vertex] {};
                \node (a2) at (45:1cm) [vertex] {};
                \node (a3) at (-45:1cm) [vertex] {};
                \node (a4) at (-135:1cm) [vertex] {};
                \node (a5) at (135:1cm) [vertex] {};
             
                \draw[line width= 2, \colord] (a1) to (a2);
                \draw[line width= 2, \colorf] (a1) to (a3);
                \draw[line width= 2, \colorg] (a1) to (a4);        
                \draw[line width= 2, \colore] (a1) to (a5);

                \node (h4) at (0,-1.3) [labell] {$\phi(\sigma) = 0$};
                
            \end{tikzpicture}
        \end{subfigure}
        \hfill
        \begin{subfigure}[t]{.2\textwidth}
            \centering
            \begin{tikzpicture}[scale=1]
            
                \node (a1) at (0, 0) [vertex] {};
                \node (a2) at (45:1cm) [vertex] {};
                \node (a3) at (-45:1cm) [vertex] {};
                \node (a4) at (-135:1cm) [vertex] {};
                \node (a5) at (135:1cm) [vertex] {};
             
                \draw[line width= 2, \colord] (a1) to (a2);
                \draw[line width= 2, \colord] (a1) to (a3);
                \draw[line width= 2, \colorg] (a1) to (a4);        
                \draw[line width= 2, \colore] (a1) to (a5);

                \node (h4) at (0,-1.3) [labell] {$\phi(\sigma) = \binom{2}{2} = 1$};
                
            \end{tikzpicture}
        \end{subfigure}
        \hfill
        \begin{subfigure}[t]{.2\textwidth}
            \centering
            \begin{tikzpicture}[scale=1]
            
                \node (a1) at (0, 0) [vertex] {};
                \node (a2) at (45:1cm) [vertex] {};
                \node (a3) at (-45:1cm) [vertex] {};
                \node (a4) at (-135:1cm) [vertex] {};
                \node (a5) at (135:1cm) [vertex] {};
             
                \draw[line width= 2, \colord] (a1) to (a2);
                \draw[line width= 2, \colord] (a1) to (a3);
                \draw[line width= 2, \colorg] (a1) to (a4);        
                \draw[line width= 2, \colorg] (a1) to (a5);

                \node (h4) at (0,-1.3) [labell] {$\phi(\sigma) = \binom{2}{2} + \binom{2}{2} = 2$};
                
            \end{tikzpicture}
        \end{subfigure}
        \hfill
        \begin{subfigure}[t]{.2\textwidth}
            \centering
            \begin{tikzpicture}[scale=1]
            
                \node (a1) at (0, 0) [vertex] {};
                \node (a2) at (45:1cm) [vertex] {};
                \node (a3) at (-45:1cm) [vertex] {};
                \node (a4) at (-135:1cm) [vertex] {};
                \node (a5) at (135:1cm) [vertex] {};
             
                \draw[line width= 2, \colord] (a1) to (a2);
                \draw[line width= 2, \colord] (a1) to (a3);
                \draw[line width= 2, \colord] (a1) to (a4);        
                \draw[line width= 2, \colord] (a1) to (a5);

                \node (h4) at (0,-1.3) [labell] {$\phi(\sigma) = \binom{4}{2} = 6$};
                
            \end{tikzpicture}
        \end{subfigure}
        \caption{Potential on a $4$-star for some colorings.}
        \label{fig:pot}
\end{figure}

The algorithm consists of a random walk that starts at an arbitrary coloring and at each step moves to a neighboring coloring (differing on exactly one edge) uniformly at random. 
In the \emph{mild} version of the walk\footnote{A \emph{strict} and a \emph{weak} versions are also defined, but they are not discussed in here.} we have that $\sigma'$ is an \emph{out-neighbor} of $\sigma$ if $\psi(\sigma')\leq \psi(\sigma)$. In this sense, the chain can be viewed as a restricted version of the Glauber dynamics on (non-necessarily proper) edge-colorings, limited to movements that do not increase the potential. If the walk reaches a proper edge-coloring, then the algorithm stops and outputs it.

In~\cite{Dotan20}, based on simulations, the authors conjectured that the mild random walk in the $(2n-1)$-edge-colorings of $K_{2n}$ almost surely reaches a one-factorization (here $\Delta=k$), and that it does so in $\tilde{O}(n^4)$ steps.

Our main result studies the algorithm for $k\geq \Delta+1$, for \emph{all} graphs, providing an algorithmic version of Vizing's theorem.
\begin{theorem}\label{thm:random walk almost surely wins}
Let $G$ be any graph with maximum degree $\Delta$ and $k\geq \Delta+1$. Then the hill-climbing algorithm with the mild random walk starting at an arbitrary (non-proper) $k$-edge-coloring almost surely produces a proper $k$-edge-coloring of $G$.
\end{theorem}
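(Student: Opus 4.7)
The overall plan is to view the mild hill-climbing algorithm as a Markov chain on the finite state space $\Omega$ of $k$-edge-colorings of $G$, and to reduce the almost-sure statement to a graph-theoretic reachability lemma which is then proved by an algorithmic variant of Vizing's fan/Kempe-chain argument. Since the set $\Omega^\ast$ of proper $k$-edge-colorings is an absorbing set of the chain and $\Omega$ is finite, it is enough to produce, from every $\sigma$ with $\psi(\sigma)>0$, a sequence of mild single-edge recolorings of length at most some $L=L(|V(G)|,k)$ leading to a state with strictly smaller potential. Each window of $L$ consecutive steps of the chain then realizes such a sequence with probability at least $(km)^{-L}>0$, where $m=|E(G)|$, and a standard Borel--Cantelli argument gives almost-sure convergence to $\Omega^\ast$, since $\psi$ is integer-valued, non-negative and bounded above by $\binom{\Delta}{2}|V(G)|$.

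To establish the reachability lemma, fix a pair $(v,\alpha)$ with $d_{\sigma,\alpha}(v)\geq 2$ (such a pair exists because $\psi(\sigma)>0$) and one $\alpha$-edge $e_0=vu_0$ at $v$. The repetition of $\alpha$ at $v$ forces at least $k-\Delta+1\geq 2$ missing colors at $v$, and there are at least $k-\deg(u_0)\geq 1$ missing colors at $u_0$. If some color $\beta$ is missing at both endpoints, then the single recoloring $e_0\mapsto\beta$ is mild and strictly decreases $\psi$. Otherwise I imitate the classical Vizing fan at $v$, inductively growing a sequence $u_0,u_1,\dots,u_\ell$ of $v$-neighbors together with colors $\beta_0,\beta_1,\dots$ such that $\beta_i$ is missing at $u_i$ and $\sigma(vu_{i+1})=\beta_i$, until either some $\beta_\ell$ is missing at $v$ (shift case) or some color repeats (Kempe case). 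Either terminal configuration identifies a collection of recolorings whose simultaneous effect strictly decreases $\psi$.

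The main technical obstacle is to realize this collective rearrangement as a sequence of individually mild single-edge moves. In the proper setting every atomic substep of a fan shift is potential-neutral, but here some colors $\beta_i$ along the fan may themselves be in conflict at $v$ (i.e.\ $d_{\sigma,\beta_i}(v)\geq 2$), so a naive ordering of the shift can transiently raise $\psi$. To handle this I exploit the surplus of missing colors at $v$: at each step of the fan construction I prefer an extension $\beta_i$ with $d_{\sigma,\beta_i}(v)\leq 1$, and when no such safe extension is available I fall back on a Kempe $(\alpha,\beta)$-swap with $\beta$ missing at $v$. A step-by-step analysis of the per-move change
\[
-(d_{\sigma,\alpha}(u)-1)-(d_{\sigma,\alpha}(v)-1)+d_{\sigma,\beta}(u)+d_{\sigma,\beta}(v)
\]
produced by each single-edge recoloring $uv\colon\alpha\mapsto\beta$ along the fan or chain, processed from $u_\ell$ backward to $e_0$, should show that the excess conflicts at $v$ act as sinks absorbing any local increase, while the terminal recoloring of $e_0$ into a color now missing at both its endpoints strictly decreases $\psi$ by at least one. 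The hardest step is precisely this case analysis in the non-proper regime: one must always be able to use the surplus $k-\Delta\geq 1$ to find an ordering of the Vizing reduction every single-edge step of which is mild.
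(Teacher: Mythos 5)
Your high-level framing is the same as the paper's: reduce the almost-sure statement to a deterministic reachability lemma (from any non-proper $\sigma$ there is a bounded-length monotonic single-edge recoloring sequence that strictly decreases $\psi$), and then observe that the mild random walk realizes any such fixed sequence with probability bounded away from zero, so that Borel--Cantelli (plus monotonicity and boundedness of $\psi$) gives absorption into the set of proper colorings. That reduction is correct, and your per-move change formula matches the paper's Lemma on potential differences.

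The gap is in the reachability lemma itself, which is where all the real work lies, and which you explicitly leave unfinished (``should show'', ``the hardest step is precisely this case analysis''). Your plan is to run a Vizing fan at the conflicted vertex $v$ and then order the resulting shift/Kempe recolorings so that each atomic move is mild. Two things are missing. First, you apply the fan argument directly to an arbitrary non-proper coloring, where monochromatic components can be large and bichromatic components can be far from paths or cycles; in that generality the fan bookkeeping is not controlled, and the ``Kempe case'' is not a clean alternating chain. The paper avoids this by first showing that any monochromatic component with at least three edges admits a single strictly potential-decreasing recoloring, so one may assume the coloring is a \emph{cherry coloring} (all monochromatic components have at most two edges); this reduction is essential and has no counterpart in your sketch. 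Second, even after that reduction, ordering the fan shift so that every step is mild is genuinely delicate: the paper introduces a color-shift digraph $\cD_\sigma(v)$ and proves a chain of lemmas (moving a cherry along a path, locating a vertex of degree one in a bichromatic component, routing a mark to a sink or around a chordless cycle) precisely to certify that such an ordering exists. Your proposal only gestures at a heuristic (``prefer an extension $\beta_i$ with $d_{\sigma,\beta_i}(v)\le 1$, else fall back on a Kempe swap''), without an argument that the heuristic always succeeds; in the non-proper regime it is exactly this claim that needs proof, and it is not obvious. So the proposal identifies the right difficulty but does not overcome it.
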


To do so, we prove that for any arbitrary initial $k$-edge-coloring $\sigma$ there exists a proper $k$-edge-coloring $\sigma'$ and a sequence of single-edge recolorings from $\sigma$ to $\sigma'$ with non-increasing potential along the sequence. Thus the probability of getting "closer" to a proper edge-coloring is positive at each random single-edge recoloring. Since by design of the algorithm, we cannot get further from proper edge-colorings, the probability of reaching a proper edge-coloring tends to 1 as the number of steps tends to infinity, and the algorithm find a proper-coloring with probability $1$. 

While \Cref{thm:random walk almost surely wins} is best possible for Class II graphs, the case of Class I graphs and $k=\Delta$, which includes one-factorizations of regular graphs, remains as an open problem. 

Interestingly, Narboni  recently showed in \cite{Narboni2023} that any proper $k$-edge-coloring can be transformed into an optimal proper edge-coloring by a sequence of Kempe changes, even for Class I graphs. This can be seen as another algorithmic version of Vizing's theorem. Narboni's proof, as well as ours, is based on \emph{Vizing's fans} and the use of \emph{color-shift digraphs}. Though it should be noted that the setting is not the same: in Vizing's and Narboni's proofs, they start with an arbitrary proper edge-coloring with a lot of colors, and they do Kempe changes to reduce the number of colors, always keeping a proper edge-coloring; here, we start with a non-proper edge-coloring with the desired number of colors, and do single-edge recolorings to get closer and closer to a proper coloring until reaching it.

A basic analysis of our proof only shows that the expected running time is at most exponential in $n$, however this seems far from the truth. It would be interesting to provide a polynomial bound with a uniformly bounded exponent, or even to show that it is at most $n^{c}$, where $c=c(\Delta)$ is a constant only depending on the maximum degree.
The algorithm is conjectured to be a uniform sampler for proper $n$-edge-colorings of  $K_{2n}$, provided that it starts at a random $n$-edge-coloring~\cite{Dotan20}. Our analysis unfortunately does not provide any insight on the target distribution.

\section{Results}\label{sec:det}

In this section we show that, given any initial edge-coloring, a proper one can be reached in a polynomial number of steps, with (at most) the same number of colors. For the sake of conciseness, throughout this section we say \emph{colorings} instead of \emph{edge-colorings}. We need a few definitions between restating our result.

\smallskip

\noindent
\textit{Colorings.} Let $\sigma$ be a coloring of a graph $G$.
An \emph{$\alpha$-component} (or \emph{monochromatic component}) $H$ of $\sigma$ is a connected component of the graph $(V(G),E_\alpha)$ with $E_\alpha$ the set of edges colored $\alpha$ in $\sigma$. 
Note that a proper coloring is a coloring whose monochromatic components contain at most one edge.
An \emph{$(\alpha, \beta)$-component} (or \emph{bichromatic component}) $H$ of $\sigma$ is a connected component of $(V(G), E_{\alpha}\cup E_{\beta})$, i.e. induced by the edges colored $\alpha$ and $\beta$ in $\sigma$.
Let $\alpha$ be a color of $\sigma$ and $v$ a vertex. 
We recall that $d_{\sigma, \alpha}(v)$ denotes the number of edges incident with $v$ that are colored $
\alpha$ in $\sigma$. A pair of \emph{conflicting} edges is a pair of adjacent edges having the same color. 
We denote by $L_{\sigma}(v)$ the set of colors appearing on edges incident with $v$ in $\sigma$. 
We say a color $\alpha$ is \emph{missing around} $v$ in the coloring $\sigma$ if $\alpha \notin L_{\sigma}(v)$. In the figures, throughout the paper, a color inside a vertex represents a missing color around the vertex.

\smallskip

\noindent
\textit{Cherries.}
An $\alpha$-component containing exactly two adjacent edges is called an \emph{$\alpha$-cherry}. 
A cherry $c$ \emph{is centered at a vertex $v$} if $v$ is incident with both edges of $c$. 
In this case, we denote the cherry by $(u,v,w)$ with $u$ and $w$ the other endpoints of the edges of $c$.
A \emph{cherry coloring} is a coloring whose monochromatic components contain at most two edges, i.e., are cherries, single edges, or singletons. 

\smallskip

\noindent
\textit{Monotonic recolorings.} Given a coloring $\sigma$, a coloring $\sigma'$ is obtained from $\sigma$ by a \emph{monotonic recoloring} if $\sigma$ and $\sigma'$ differ on exactly one edge and $\psi(\sigma)\geq \psi(\sigma')$, where $\psi$ is the potential defined in~\eqref{eq:potential}, counting the number of conflicting pairs of edges. Equivalently, moving from $\sigma$ to $\sigma'$ is a valid transition of the mild random walk defined in the introduction. When we want to be more specific, we say that we \emph{monotonically recolor an edge $uv$ into a color $\alpha$}, meaning that this recoloring is monotonic.
We call a sequence of colorings $(\sigma_i)_{i\geq 0}$ \emph{monotonic} if $\sigma_{i+1}$ can be obtained from $\sigma_i$ by a monotonic recoloring, for all $i\geq 0$. The number of \emph{steps} is the length of the sequence.

\begin{theorem}\label{thm:det}
    Let $\sigma_0$ be a $k$-coloring of a graph $G$ such that $k \ge \Delta(G) + 1$. Then, there exists a monotonic recoloring sequence from $\sigma_0$ to a proper $k$-coloring of $G$ using $O(n^2\Delta)$ steps.
\end{theorem}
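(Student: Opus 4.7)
The plan is to build the monotonic recoloring sequence in two phases, driving the potential $\psi$ down to~$0$. \textbf{Phase~1 (bulk reduction).} While $\sigma$ is not a cherry coloring, some monochromatic $\alpha$-component has at least three edges, and such a component always contains an edge $uv$ with $d_{\sigma,\alpha}(u)+d_{\sigma,\alpha}(v)\geq 4$ (take either an edge incident to a vertex of $\alpha$-degree $\geq 3$, or an interior edge of a length-$\geq 3$ path or cycle). Recoloring $uv$ from $\alpha$ to some $\beta\neq\alpha$ changes the potential by $(d_{\sigma,\beta}(u)+d_{\sigma,\beta}(v))-(d_{\sigma,\alpha}(u)+d_{\sigma,\alpha}(v)-2)$. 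Averaging the first term over the $k-1\geq\Delta$ alternative colors gives an average strictly less than~$2$, so some $\beta$ achieves $d_{\sigma,\beta}(u)+d_{\sigma,\beta}(v)\leq 1$ and yields a strict decrease of $\psi$ by at least~$1$. Iterating exhausts non-cherry configurations.

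\textbf{Phase~2 (cherry elimination).} Starting from a cherry coloring with a cherry $(a,v,b)$ of color $\alpha_0$, I would look for a short monotonic sub-sequence ending in a coloring of strictly smaller potential. If some color $\beta$ is missing at both $v$ and $a$, recoloring $va$ to $\beta$ strictly decreases $\psi$ by~$1$ and we are done. Otherwise I would adapt Vizing's fan at $v$: build a sequence of distinct neighbors $a=n_1,n_2,\ldots,n_t$ and colors $\beta_1,\ldots,\beta_t$ with $\beta_i$ missing at $n_i$ and $\beta_i=\sigma(vn_{i+1})$, until either (A) some $\beta_t$ is missing at $v$, or (B) a repetition $\beta_t=\beta_j$ occurs, in which case a Kempe swap along the $(\beta_t,\gamma)$-bichromatic component (for $\gamma$ missing at $v$) reduces to case~(A). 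In case~(A) I would perform the fan shift \emph{inward}, recoloring $vn_t,vn_{t-1},\ldots,va$ in this reverse order. Because every color in a cherry coloring has $v$-degree at most~$2$, and because of the ``missing color'' selection of each $\beta_i$, each inward-shift recoloring should be monotonic — this is the crucial departure from the classical outward fan shift, whose intermediate recolorings would not be monotonic. The final inward step, recoloring $va$, delivers the strict potential decrease.

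The main obstacle is the monotonicity bookkeeping in Phase~2: classical Vizing-style arguments only require the final coloring of a fan shift or Kempe swap to be proper, whereas here each single-edge recoloring must satisfy $\psi(\sigma')\leq\psi(\sigma)$. The inward-shift direction together with the bounded local structure of cherry colorings should handle the fan, but the Kempe-chain step in case~(B) — where the bichromatic component may intersect existing cherries and need not be a simple path — demands the most delicate case analysis and might require additional preprocessing (e.g., first running more Phase~1 reductions to clear conflicting cherries along the intended chain). For the step count, Phase~1 contributes at most $\psi(\sigma_0)\leq\binom{\Delta}{2}n=O(n\Delta^2)=O(n^2\Delta)$ strict-decrease steps across the entire process, including any bulk-reduction bursts triggered when Phase~2 transiently leaves the cherry-coloring regime. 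After the initial bulk reduction, the potential equals the cherry count, which is at most $n\Delta/4$; each Phase~2 invocation removes one cherry using $O(n)$ recolorings (fan length $O(\Delta)$ plus possibly one Kempe chain of length $O(n)$), contributing $O(n)\cdot O(n\Delta)=O(n^2\Delta)$ steps overall. Combined, the sequence has length $O(n^2\Delta)$, as required.
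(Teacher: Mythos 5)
Your Phase~1 is correct and matches the paper's Lemma on monochromatic components of size~$\geq 3$: the averaging over the $k-1\geq\Delta$ alternative colors is exactly the counting used to find a color with at most one incidence at $\{u,v\}$, giving a strict potential drop via the formula $\psi(\sigma)-\psi(\sigma')=d_{\sigma,\alpha}(u)+d_{\sigma,\alpha}(v)-(d_{\sigma,\beta}(u)+d_{\sigma,\beta}(v))-2$. Your step-count bookkeeping is also sound. Moreover, your Phase~2 framework is morally the same as the paper's: what you describe as a Vizing fan (a sequence of neighbors with $\beta_i=\sigma(vn_{i+1})$ missing at $n_i$) is precisely a directed path in the paper's color-shift digraph $\cD_\sigma(v)$ (arc $(\alpha,\beta)$ if some $uv$ has $\sigma(uv)=\alpha$ and $\beta\notin L_\sigma(u)$); your case~(A) is a path to an out-degree-$0$ color, your case~(B) a directed cycle.

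However, there are two genuine gaps, both of which you flag but do not resolve, and both of which constitute the bulk of the paper's actual work. First, the \emph{inward fan shift} is not verified to be monotonic in the presence of cherries. In a cherry coloring a color may appear \emph{twice} around $v$, so after you recolor $vn_t\to\beta_t$, the color $\beta_{t-1}$ may still appear at $v$ (if $\beta_{t-1}$ is itself a cherry color at $v$), and the next inward step $vn_{t-1}\to\beta_{t-1}$ would then create a new cherry, possibly increasing $\psi$. The paper handles this by tracking ``marked'' colors (cherry colors at $v$) in the digraph and, crucially, by first reducing to a chordless cycle containing exactly one marked color before doing any shift. That reduction is not a cosmetic step; without it your fan arithmetic does not go through. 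Second, your case~(B) Kempe chain is acknowledged as ``the most delicate'' and left unfinished: the $(\gamma_1,\gamma_2)$-bichromatic component of a cherry coloring can have vertices of degree~$3$ (centers of $\gamma_i$-cherries off the chain), and you cannot simply walk along it monotonically. The paper dedicates several lemmas to clearing such cherries off the intended path (by monotonically recoloring the offending cherry edges to colors outside $\{\gamma_1,\gamma_2\}$) and then sliding the remaining cherry along a degree-$\leq 2$ path to a degree-$1$ endpoint, where a final recoloring strictly drops $\psi$. You correctly anticipate that ``additional preprocessing'' is needed, but that preprocessing is the core technical content, not an afterthought. There is also a subtler point you do not address: after a full trip around the cycle, the paper must argue that the endpoint $w$ of the displaced cherry edge now has degree $1$ or is a cherry center of degree~$2$ in the new bichromatic component, which is what allows the final drop; your proposal does not explain why the Kempe swap terminates in a strict decrease rather than merely cycling.

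So: right skeleton, right complexity budget, but the proof of monotonicity throughout Phase~2 --- which is the entire point of the theorem --- is missing. In its current form the proposal would not constitute a proof.
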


The rest of the section is devoted to proving the theorem. The strategy is as follows. If the current coloring contains a monochromatic component of size at least $3$, we prove that its potential can strictly decrease in a single step.
Otherwise, our coloring is a cherry coloring. 
If the cherry coloring is non-proper, then we can decrease the potential in $\Oh(n)$ steps.
Since the potential of any coloring is $\Oh(n\Delta^2)$ and the potential of a cherry coloring is $\Oh(n\Delta)$, we can obtain a proper coloring from any coloring using $\Oh(n\Delta^2 + n\cdot(n\Delta)) = \Oh(n^2\Delta)$ steps.

\subsection{Monochromatic components of size at least three}

First, we prove that we can always reduce the size of a monochromatic component of size at least $3$, while decreasing the potential. We start looking at the difference between two consecutive potentials.

\begin{lemma}\label{obs:potential counting}
Let $\sigma'$ be a coloring obtained from another coloring $\sigma$ by recoloring some edge $uv$ from color $\alpha$ to color $\beta$. Then 
$$\psi(\sigma)-\psi(\sigma')= d_{\sigma,\alpha}(u) + d_{\sigma,\alpha}(v) - (d_{\sigma,\beta}(u) + d_{\sigma,\beta}(v)) - 2$$
\end{lemma}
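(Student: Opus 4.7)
The plan is to unwind the definition of $\psi$ in \eqref{eq:potential} and observe that recoloring a single edge affects only four summands. Since the edge $uv$ is recolored from $\alpha$ to $\beta$, the only $(x,\gamma)$-pairs for which $d_{\sigma,\gamma}(x) \neq d_{\sigma',\gamma}(x)$ are those with $x \in \{u,v\}$ and $\gamma \in \{\alpha,\beta\}$. More precisely, for $x \in \{u,v\}$ we have $d_{\sigma',\alpha}(x) = d_{\sigma,\alpha}(x) - 1$ (we remove an $\alpha$-colored edge at $x$) and $d_{\sigma',\beta}(x) = d_{\sigma,\beta}(x) + 1$ (we add a $\beta$-colored edge at $x$). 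All other terms of $\psi$ cancel when subtracting $\psi(\sigma')$ from $\psi(\sigma)$.

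Next I would apply the elementary identities
\begin{equation*}
\binom{m}{2} - \binom{m-1}{2} = m-1, \qquad \binom{m}{2} - \binom{m+1}{2} = -m,
\end{equation*}
to each of the four surviving summands. Concretely, the contribution from the $\alpha$-terms at $u$ and $v$ is
\begin{equation*}
\bigl(d_{\sigma,\alpha}(u)-1\bigr) + \bigl(d_{\sigma,\alpha}(v)-1\bigr),
\end{equation*}
while the contribution from the $\beta$-terms at $u$ and $v$ is
\begin{equation*}
-d_{\sigma,\beta}(u) - d_{\sigma,\beta}(v).
\end{equation*}
Summing these four quantities yields exactly
\begin{equation*}
d_{\sigma,\alpha}(u) + d_{\sigma,\alpha}(v) - d_{\sigma,\beta}(u) - d_{\sigma,\beta}(v) - 2,
\end{equation*}
as claimed.

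There is really no obstacle here; the statement is a direct bookkeeping exercise. The only point that requires a small sanity check is the orientation of signs: one should make sure that we are computing $\psi(\sigma) - \psi(\sigma')$ (and not its negative), and that the $\alpha$-degrees decrease while the $\beta$-degrees increase when passing from $\sigma$ to $\sigma'$. Once this is fixed, the two binomial identities above immediately give the formula.
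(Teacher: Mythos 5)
Your proof is correct and follows essentially the same approach as the paper: identify the four summands of $\psi$ that change when $uv$ is recolored, substitute $d_{\sigma',\alpha}(x)=d_{\sigma,\alpha}(x)-1$ and $d_{\sigma',\beta}(x)=d_{\sigma,\beta}(x)+1$ for $x\in\{u,v\}$, and compute the difference. The only cosmetic deviation is that the paper first rewrites $\psi(\sigma)=\tfrac12\sum_{v,\gamma} d_{\sigma,\gamma}(v)^2-|E(G)|$ and expands squares, whereas you apply the binomial identities $\binom{m}{2}-\binom{m-1}{2}=m-1$ and $\binom{m}{2}-\binom{m+1}{2}=-m$ directly; both yield the same bookkeeping.
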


\begin{proof}
We can write
\begin{equation}\label{eq:potential_rewrite}
\psi(\sigma) = \sum_{v\in V(G)}\sum_{\alpha\in [k]} \binom{d_{\sigma,\alpha}(v)}{2} = \left(\frac{1}{2}\sum_{v\in V(G)}\sum_{\alpha\in [k]} d_{\sigma,\alpha}(v)^2\right) -|E(G)|. 
\end{equation}

From which we can derive 
\begin{align*}
2(\psi(\sigma) - \psi(\sigma')) &= d_{\sigma,\alpha}(u)^2 +  d_{\sigma,\alpha}(v)^2 +  d_{\sigma,\beta}(u)^2 +  d_{\sigma,\beta}(v)^2 -  d_{\sigma',\alpha}(u)^2  -  d_{\sigma',\alpha}(v)^2 - d_{\sigma',\beta}(u)^2  - d_{\sigma',\beta}(v)^2 \\
&=  d_{\sigma,\alpha}(u)^2 +  d_{\sigma,\alpha}(v)^2 +  d_{\sigma,\beta}(u)^2 +  d_{\sigma,\beta}(v)^2 \\
&\qquad -  (d_{\sigma,\alpha}(u) - 1)^2 -  (d_{\sigma,\alpha}(v) - 1)^2 - (d_{\sigma,\beta}(u) + 1)^2 - (d_{\sigma,\beta}(v) + 1)^2 \\ 
&= 2\left(d_{\sigma,\alpha}(u) + d_{\sigma,\alpha}(v) - (d_{\sigma,\beta}(u) + d_{\sigma,\beta}(v)) - 2\right) . \\
\end{align*}

\end{proof}

\begin{lemma}\label{lem:monosize2}
Let $\sigma$ be a $k$-coloring that induces a monochromatic component containing at least $3$ edges, with $k\geq \Delta+1$. Then, there exists a neighboring $k$-coloring $\sigma'$ such that $\psi(\sigma)>\psi(\sigma')$, i.e., we can decrease the potential in one step.
\end{lemma}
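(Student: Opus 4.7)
By Lemma~\ref{obs:potential counting}, if $\sigma'$ is obtained from $\sigma$ by recoloring a single edge $uv$ from color $\alpha$ to color $\beta$, then
$$\psi(\sigma) - \psi(\sigma') = d_{\sigma,\alpha}(u) + d_{\sigma,\alpha}(v) - d_{\sigma,\beta}(u) - d_{\sigma,\beta}(v) - 2.$$
So my goal is to locate an edge $uv$ in the $\alpha$-colored monochromatic component $H$ of size at least $3$ and a color $\beta \neq \alpha$ making this quantity strictly positive.

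The key structural step is to show that there is an edge $uv \in E(H)$ with $d_{\sigma,\alpha}(u) + d_{\sigma,\alpha}(v) \geq 4$, i.e.\ an edge of $H$ whose endpoints have degree-sum at least $4$ in $H$. If $H$ has a vertex of degree $\geq 3$, any incident edge works (sum $\geq 3 + 1$). Otherwise $H$ has maximum degree at most $2$ and, being connected, is a path or a cycle: in a cycle every edge already has endpoint degrees $(2,2)$, and in a path with at least $3$ edges any interior edge does. Either way, such an edge $uv$ exists.

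Next, I apply a pigeonhole over the remaining colors. Since $\sum_{\beta \in [k]} d_{\sigma,\beta}(u) = d(u) \leq \Delta$ and likewise for $v$,
$$\sum_{\beta \neq \alpha} \bigl( d_{\sigma,\beta}(u) + d_{\sigma,\beta}(v) \bigr) = d(u) + d(v) - d_{\sigma,\alpha}(u) - d_{\sigma,\alpha}(v) \leq 2\Delta - 4.$$
Since $k \geq \Delta + 1$, there are at least $\Delta$ colors $\beta \neq \alpha$, so the minimum value of $d_{\sigma,\beta}(u) + d_{\sigma,\beta}(v)$ over these colors is at most $\lfloor (2\Delta-4)/\Delta \rfloor \leq 1$. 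Picking any such $\beta$ and recoloring $uv$ accordingly gives
$$\psi(\sigma) - \psi(\sigma') \ \geq\ 4 - 1 - 2 \ =\ 1 \ >\ 0,$$
which is exactly the desired monotonic recoloring.

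The main subtlety, and the step I expect to require the most care, is the quantitative tightening to $d_{\sigma,\alpha}(u) + d_{\sigma,\alpha}(v) \geq 4$ rather than the obvious $\geq 3$. The weaker bound $\geq 3$, immediate for any monochromatic component with $\geq 2$ edges, would force $d_{\sigma,\beta}(u) + d_{\sigma,\beta}(v) = 0$ for the chosen $\beta$, i.e.\ a color missing around both $u$ and $v$; such a color need not exist when $k = \Delta + 1$, because $|L_\sigma(u) \cup L_\sigma(v)|$ can be as large as $2\Delta - 1$. The structural gain of one unit coming from $|E(H)| \geq 3$ is precisely what allows the pigeonhole to go through at the tight threshold $k \geq \Delta + 1$.
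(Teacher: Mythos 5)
Your proof is correct and follows essentially the same two-step approach as the paper: locate an edge $uv$ in the large monochromatic component with $d_{\sigma,\alpha}(u) + d_{\sigma,\alpha}(v) \geq 4$ (via the same case analysis on the component's structure), then use $k \geq \Delta+1$ and a pigeonhole/averaging argument to find a color $\beta \neq \alpha$ with $d_{\sigma,\beta}(u) + d_{\sigma,\beta}(v) \leq 1$, and finish with Lemma~\ref{obs:potential counting}. The paper phrases the pigeonhole slightly differently (counting edges incident with $\{u,v\}$ rather than averaging incidences over the colors $\beta \neq \alpha$), but the idea and the tight threshold analysis you highlight at the end are exactly what the paper uses.
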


\begin{proof}

Let $\alpha$ be a color that induces a component containing at least $3$ edges. 
Then, this component contains either a vertex of degree at least $3$, a triangle or a path of length at least $3$.
In all three cases, there exists an edge $uv$ such that $\sigma(uv) = \alpha$, and $d_{\sigma,\alpha}(u) + d_{\sigma,\alpha}(v) \ge 4$; in other words there are at least four "incidences" between $\{u,v\}$ and the edges colored $\alpha$ by $\sigma$ (counting $uv$ twice). 

Since there are at most $2\Delta - 1 \le 2k - 3$ edges that are incident with $\{u,v\}$, there is a color $\beta$ appearing at most once among these edges around $u$ or $v$. In other words, $d_{\sigma,\beta}(u) + d_{\sigma,\beta}(v) \le 1$.

Let $\sigma'$ be the $k$-coloring obtained from $\sigma$ by recoloring $uv$ into $\beta$ (see \Cref{fig:3edgecomp}). 
Then using \Cref{obs:potential counting}:
\begin{align*}
\psi(\sigma) - \psi(\sigma') 
&= d_{\sigma,\alpha}(u) + d_{\sigma,\alpha}(v) - (d_{\sigma,\beta}(u) + d_{\sigma,\beta}(v)) - 2 \\ 
&\ge 4 - 1 - 2 \\
&\ge 1.
\end{align*}

In others words, intuitively, the decrease of the number of pairs of conflicting edges of color $\alpha$ in total in both vertices is greater than the increase of the number of pairs of conflicting edges of color $\beta$.
\end{proof}

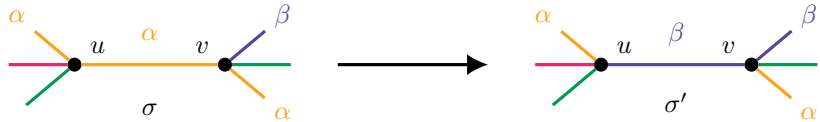
\begin{figure}[hbtp]
    \centering
        \tikzstyle{vertex}=[circle,draw, minimum size=7pt, scale=0.7, inner sep=1pt, fill = black]
        \tikzstyle{v}=[opacity=0,text opacity=1]
        
        \tikzstyle{v2}=[circle,draw, minimum size=12pt, scale=1.1, inner sep=2pt]
        \tikzstyle{vc}=[draw, minimum size=12pt, scale=1.1, inner sep=2pt]
        \tikzstyle{fleche}=[->,>=latex]
        \tikzstyle{garrow}=[arrows = {-Latex[width=8pt, length=8pt]}]
        \tikzstyle{labell}=[opacity=0,text opacity=1, scale =1]
        
        \begin{tikzpicture}[scale=1]

        \node (a1) at (0, 0) [vertex, label={20:$u$}] {};
        \node (a2) at (2,0) [vertex, label={160:$v$}] {};
        
        \node (a4) at (140:1cm) [v, \colord] {$\alpha$};
        \node (a5) at (220:1cm) [v, \colorf] {};
        \node (a6) at (180:1cm) [v, \colore] {};

        \draw[very thick, \colord] (a1) to (a2);  
        
        \node (h1) at (1,0.4)[labell, \colord] {$\alpha$};

        \node (h1) at (1,-0.6)[labell, scale=1] {$\sigma$};

        \draw[very thick, \colord] (a1) to (a4);
        \draw[very thick, \colorf] (a1) to (a5);
        \draw[very thick, \colore] (a1) to (a6);

        \tikzset{xshift=2cm}
        \node (a4) at (40:1cm) [v, \colorg] {$\beta$};
        \node (a5) at (-40:1cm) [v, \colord] {$\alpha$};
        \node (a6) at (0:1cm) [v, \colorf] {};

        \draw[very thick, \colorg] (a2) to (a4);
        \draw[very thick, \colord] (a2) to (a5);
        \draw[very thick, \colorf] (a2) to (a6);

        \draw[line width=1.2, garrow] (1.5, 0) to (3.5, 0);

        \tikzset{xshift=5cm}

        \node (a1) at (0, 0) [vertex, label={20:$u$}] {};
        \node (a2) at (2,0) [vertex, label={160:$v$}] {};
        
        \node (a4) at (140:1cm) [v, \colord] {$\alpha$};
        \node (a5) at (220:1cm) [v, \colorf] {};
        \node (a6) at (180:1cm) [v, \colore] {};

        \draw[very thick, \colorg] (a1) to (a2);  
        
        \node (h1) at (1,0.4)[labell, \colorg] {$\beta$};

        \node (h1) at (1,-0.5)[labell, scale=1] {$\sigma'$};

        \draw[very thick, \colord] (a1) to (a4);
        \draw[very thick, \colorf] (a1) to (a5);
        \draw[very thick, \colore] (a1) to (a6);

        \tikzset{xshift=2cm}
        \node (a4) at (40:1cm) [v, \colorg] {$\beta$};
        \node (a5) at (-40:1cm) [v, \colord] {$\alpha$};
        \node (a6) at (0:1cm) [v, \colorf] {};

        \draw[very thick, \colorg] (a2) to (a4);
        \draw[very thick, \colord] (a2) to (a5);
        \draw[very thick, \colorf] (a2) to (a6);

        \end{tikzpicture}
    \caption{On the left, the color $\alpha$ appears at least four times around $u$ and $v$ (counting $uv$ twice). By \Cref{lem:monosize2}, we can reduce the potential by recoloring $uv$ into a color appearing at most once around $u$ and $v$.}
        \label{fig:3edgecomp}
\end{figure}

Hence, we can apply \Cref{lem:monosize2} on our current coloring until we obtain a cherry coloring. 
Since we decrease the potential by at least one unit at each step and the potential is $\Oh(n\Delta^2)$, we can obtain a cherry coloring in $\Oh(n\Delta^2)$ steps.

\subsection{Bichromatic components}

It remains to show that we can decrease the potential of any cherry coloring that is not proper, in $O(n)$ steps.
The potential of a cherry coloring is linear in the number of cherries, thus, the potential decreases if and only if the number of cherries decreases.
We start by proving some properties on cherry colorings and their bichromatic components. In particular, we will show that a cherry (monochromatic path of length two) can be moved within a bichromatic component, allowing us to shift it in a convenient place for its eventual elimination.
Before showing it, we prove a number of simple but useful statements.

\begin{lemma}\label{prop:movecherry}
Let $\sigma$ be a cherry coloring and $uv$ be an edge of an $\alpha$-cherry of $\sigma$. If there is a color $\beta\notin L_{\sigma}(v)$ such that $u$ is not the center of a $\beta$-cherry, then we can monotonically recolor $uv$ into $\beta$.
\end{lemma}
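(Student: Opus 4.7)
The plan is to invoke Lemma \ref{obs:potential counting} directly and verify that the potential change is non-negative, using the structural constraints given by the cherry coloring and the hypotheses on $\beta$.

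First I would identify the $\alpha$-degrees. Since $uv$ is an edge of an $\alpha$-cherry and $\sigma$ is a cherry coloring, the $\alpha$-component containing $uv$ has exactly two edges sharing a common endpoint (the \emph{center} of the cherry). One of $u,v$ is this center and the other is a leaf, so $\{d_{\sigma,\alpha}(u), d_{\sigma,\alpha}(v)\} = \{1,2\}$, and in particular
\[
d_{\sigma,\alpha}(u) + d_{\sigma,\alpha}(v) = 3.
\]

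Next I would bound the $\beta$-degrees from above. Since $\beta \notin L_\sigma(v)$, no edge incident to $v$ is colored $\beta$, hence $d_{\sigma,\beta}(v) = 0$. Moreover, because $\sigma$ is a cherry coloring, every monochromatic component has size at most $2$, so $d_{\sigma,\beta}(u) \le 2$; and if $d_{\sigma,\beta}(u) = 2$, then $u$ would be the center of a $\beta$-cherry, contradicting the hypothesis. Therefore $d_{\sigma,\beta}(u) \le 1$, giving
\[
d_{\sigma,\beta}(u) + d_{\sigma,\beta}(v) \le 1.
\]

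Finally I would let $\sigma'$ be obtained from $\sigma$ by recoloring $uv$ into $\beta$ and apply Lemma \ref{obs:potential counting}:
\[
\psi(\sigma) - \psi(\sigma') = d_{\sigma,\alpha}(u) + d_{\sigma,\alpha}(v) - \bigl(d_{\sigma,\beta}(u) + d_{\sigma,\beta}(v)\bigr) - 2 \ge 3 - 1 - 2 = 0,
\]
so $\psi(\sigma) \ge \psi(\sigma')$ and the recoloring is monotonic. There is no real obstacle here: the statement is essentially a bookkeeping consequence of the potential formula together with the defining property of cherry colorings (monochromatic components of size $\le 2$), which is exactly what keeps the $\beta$-degree of $u$ from exceeding $1$.
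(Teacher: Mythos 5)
Your proof is correct and follows exactly the paper's argument: compute $d_{\sigma,\alpha}(u)+d_{\sigma,\alpha}(v)=3$, bound $d_{\sigma,\beta}(u)+d_{\sigma,\beta}(v)\le 1$ using the hypotheses, and apply \Cref{obs:potential counting} to conclude $\psi(\sigma)-\psi(\sigma')\ge 0$. You merely spell out the bound on the $\beta$-degrees in slightly more detail; otherwise the two proofs coincide.
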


\begin{proof}
Let $\sigma'$ be the $k$-coloring obtained from $\sigma$ by recoloring $uv$ into $\beta$ (see \Cref{fig:exchangecherry}). 
Since $uv$ is an edge of an $\alpha$-cherry, we have that $d_{\sigma, \alpha}(u) + d_{\sigma, \alpha}(v) = 3$. By hypothesis, $d_{\sigma, \beta}(u) + d_{\sigma, \beta}(v) \leq 1$. 
Using \Cref{obs:potential counting}, it follows that:
\begin{align*}
\psi(\sigma) - \psi(\sigma') 
&= (d_{\sigma,\alpha}(u) + d_{\sigma,\alpha}(v) - (d_{\sigma,\beta}(u) + d_{\sigma,\beta}(v)) - 2 \\ 
&\ge 3 - 1 - 2 \\
&\ge 0.
\end{align*}
\end{proof}

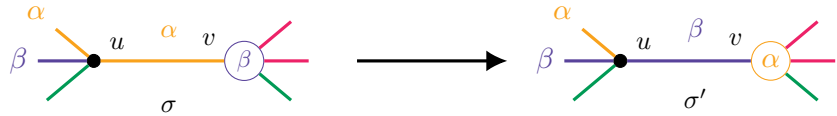
\begin{figure}[hbtp]
        \begin{center}
        \tikzstyle{vertex}=[circle,draw, minimum size=7pt, scale=0.7, inner sep=1pt, fill = black]
        \tikzstyle{v}=[opacity=0,text opacity=1, scale=1.1, minimum size=7pt]
        
        \tikzstyle{v2}=[circle,draw, minimum size=12pt, scale=1.1, inner sep=2pt]
        \tikzstyle{vc}=[draw, minimum size=12pt, scale=1.1, inner sep=2pt]
        \tikzstyle{fleche}=[->,>=latex]
        \tikzstyle{garrow}=[arrows = {-Latex[width=8pt, length=8pt]}]
        \tikzstyle{labell}=[opacity=0,text opacity=1, scale =1]
        
        \begin{tikzpicture}[scale=1]

        \node (a1) at (0, 0) [vertex, label={20:$u$}] {};
        \node (a2) at (2,0) [v2, \colorg, label={160:$v$}, scale=0.82] {$\beta$};
        
        \node (a4) at (140:1cm) [v, \colord] {$\alpha$};
        \node (a5) at (220:1cm) [v, \colorf] {};
        \node (a6) at (180:1cm) [v, \colorg] {$\beta$};

        \draw[very thick, \colord] (a1) to (a2);

        \node (h1) at (1,0.4)[labell, \colord] {$\alpha$};

        \node (h1) at (1,-0.6)[labell, scale=1] {$\sigma$};

        \draw[very thick, \colord] (a1) to (a4);
        \draw[very thick, \colorf] (a1) to (a5);
        \draw[very thick, \colorg] (a1) to (a6);

        \tikzset{xshift=2cm}
        \node (a4) at (40:1cm) [v, \colore] {};
        \node (a5) at (-40:1cm) [v, \colorf] {};
        \node (a6) at (0:1cm) [v, \colore] {};

        \draw[very thick, \colore] (a2) to (a4);
        \draw[very thick, \colorf] (a2) to (a5);
        \draw[very thick, \colore] (a2) to (a6);

        \draw[line width=1.2, garrow] (1.5, 0) to (3.5, 0);

        \tikzset{xshift=5cm}

        \node (a1) at (0, 0) [vertex, label={20:$u$}] {};
        \node (a2) at (2,0) [v2, \colord, label={160:$v$}] {$\alpha$};
        
        \node (a4) at (140:1cm) [v, \colord] {$\alpha$};
        \node (a5) at (220:1cm) [v, \colorf] {};
        \node (a6) at (180:1cm) [v, \colorg] {$\beta$};

        \draw[very thick, \colorg] (a1) to (a2);  
        
        \node (h1) at (1,0.4)[labell, \colorg] {$\beta$};

        \node (h1) at (1,-0.5)[labell, scale=1] {$\sigma'$};
        \draw[very thick, \colord] (a1) to (a4);
        \draw[very thick, \colorf] (a1) to (a5);
        \draw[very thick, \colorg] (a1) to (a6);

        \tikzset{xshift=2cm}
        \node (a4) at (40:1cm) [v, \colore] {};
        \node (a5) at (-40:1cm) [v, \colorf] {};
        \node (a6) at (0:1cm) [v, \colore] {};

        \draw[very thick, \colore] (a2) to (a4);
        \draw[very thick, \colorf] (a2) to (a5);
        \draw[very thick, \colore] (a2) to (a6);

        \end{tikzpicture}
        \end{center}
    \caption{Illustration of \Cref{prop:movecherry}. We monotonically recolor the edge $uv$ of the $\alpha$-cherry (centered at $u$ here). }
        \label{fig:exchangecherry}
\end{figure}

\begin{lemma}\label{prop:numbermissing}
Let $\sigma$ be a cherry coloring, with $k\geq \Delta(G)+1$ colors. The number of colors missing around a vertex $v$ in $\sigma$ is at least $m+1$, where $m$ is the number of cherries of $\sigma$ centered at $v$.
\end{lemma}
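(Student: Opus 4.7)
The plan is a simple counting argument exploiting the fact that in a cherry coloring, no color class has degree greater than $2$ at any vertex.

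Let $d=d(v)$ be the degree of $v$ in $G$, and partition the colors appearing at $v$ according to their $v$-degree. Since $\sigma$ is a cherry coloring, every monochromatic component has at most two edges, and therefore $d_{\sigma,\alpha}(v)\in\{0,1,2\}$ for each color $\alpha$. A color $\alpha$ satisfies $d_{\sigma,\alpha}(v)=2$ precisely when $v$ is the center of an $\alpha$-cherry; by hypothesis there are exactly $m$ such colors. Let $s$ denote the number of colors with $d_{\sigma,\alpha}(v)=1$. Then $|L_\sigma(v)|=m+s$ and, summing the $v$-degrees over all colors,
\[
d \;=\; \sum_{\alpha\in[k]} d_{\sigma,\alpha}(v) \;=\; 2m+s.
\]

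The number of colors missing around $v$ is $k-|L_\sigma(v)|=k-m-s$. Using $k\geq\Delta+1\geq d+1=2m+s+1$, we obtain
\[
k-m-s \;\geq\; (2m+s+1)-m-s \;=\; m+1,
\]
which is the desired inequality. I do not anticipate any obstacle: this is a one-line identity once the structural observation $d_{\sigma,\alpha}(v)\leq 2$ is made explicit, and the hypothesis $k\geq\Delta+1$ is used exactly once at the end.
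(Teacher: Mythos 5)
Your argument is correct and is essentially the paper's proof: both count the $m$ colors of $v$-degree $2$ and the $d(v)-2m$ colors of $v$-degree $1$ to get $|L_\sigma(v)|=d(v)-m$, then invoke $d(v)\le\Delta\le k-1$. Introducing the variable $s$ is only a notational variation.
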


\begin{proof}
Suppose that $v$ is the center of exactly $m$ cherries, then $m$ colors appear twice around $v$ and the other colors appear at most once. Thus, we have that $ | L_{\sigma}(v) | =  m + (d(v) - 2m) = d(v) - m \leq  k - (m+1)$. Hence, there are at least $m+1$ colors missing around $v$.
\end{proof}

Next result allows us to either break or change the color of a given cherry, this will prove most useful in the coming subsections.

\begin{lemma}\label{lem:exchangecherry}
Let $\sigma$ be a cherry coloring, $uv$ an edge of an $\alpha$-cherry of $\sigma$ and $\gamma \in L_{\sigma}(v)$. Then, we can monotonically recolor $uv$ into a color $\beta\notin L_{\sigma}(u)\cap L_{\sigma}(v)$, with $\beta\neq \alpha, \gamma$.
\end{lemma}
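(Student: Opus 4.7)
The plan is to translate all three conditions in the statement into a single degree-sum inequality at the endpoints of $uv$, and then count the colors that satisfy it.

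First, I would invoke \Cref{obs:potential counting}. Since $uv$ lies in an $\alpha$-cherry, $d_{\sigma,\alpha}(u) + d_{\sigma,\alpha}(v) = 3$, so recoloring $uv$ into a color $\beta$ is monotonic if and only if $d_{\sigma,\beta}(u) + d_{\sigma,\beta}(v) \leq 1$. Any such $\beta$ automatically satisfies $\beta \notin L_{\sigma}(u) \cap L_{\sigma}(v)$, since membership in both would force the degree sum to be at least $2$. Thus it suffices to find $\beta \in [k] \setminus \{\alpha, \gamma\}$ with $d_{\sigma,\beta}(u) + d_{\sigma,\beta}(v) \leq 1$; let $A$ denote the set of all such colors.

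The next step is a short double count. From
\[
\sum_{\beta \in [k]} \bigl(d_{\sigma,\beta}(u) + d_{\sigma,\beta}(v)\bigr) = d(u) + d(v) \leq 2\Delta,
\]
noting that $\alpha$ alone contributes $3$ and every other color in $[k] \setminus A$ contributes at least $2$, we obtain $3 + 2(|[k]\setminus A| - 1) \leq 2\Delta$, so $|[k]\setminus A| \leq \Delta - 1$ and therefore $|A| \geq k - \Delta + 1 \geq 2$, using $k \geq \Delta + 1$. Since $\alpha \notin A$ (its contribution is $3$), we get $|A \setminus \{\alpha, \gamma\}| \geq |A| - 1 \geq 1$, and any color in this set gives the desired recoloring.

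The main conceptual step is really the first one: recognizing that both the combinatorial condition $\beta \notin L_{\sigma}(u) \cap L_{\sigma}(v)$ and the monotonicity condition collapse into the same degree-sum inequality. After that there is no real obstacle. One could alternatively prove this lemma by invoking \Cref{prop:movecherry} directly and case-splitting on whether $m_u \geq m_v$ or $m_u < m_v$, using \Cref{prop:numbermissing} to guarantee enough missing colors at the appropriate vertex, but the counting route above is shorter and avoids a case analysis.
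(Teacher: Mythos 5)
Your proof is correct, and it takes a genuinely different route from the paper's. The paper applies \Cref{prop:movecherry} as a black box, so it must hunt for a color $\beta$ that is missing at one endpoint and appears at most once at the other; this forces a case analysis (either some color missing at $v$ already works, or else one reduces to a symmetric situation at $u$), using \Cref{prop:numbermissing} in both cases to count missing colors via the number of cherries centered at each endpoint. You instead unfold \Cref{obs:potential counting} directly: since $d_{\sigma,\alpha}(u)+d_{\sigma,\alpha}(v)=3$, a recoloring of $uv$ to $\beta$ is monotonic precisely when $d_{\sigma,\beta}(u)+d_{\sigma,\beta}(v)\le 1$, and this single inequality simultaneously guarantees $\beta\notin L_\sigma(u)\cap L_\sigma(v)$. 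The double count $\sum_\beta\bigl(d_{\sigma,\beta}(u)+d_{\sigma,\beta}(v)\bigr)=d(u)+d(v)\le 2\Delta$ then yields $|A|\ge k-\Delta+1\ge 2$, and removing $\gamma$ still leaves a valid choice. This is cleaner: no case split, no need for \Cref{prop:numbermissing}, and it works with the weaker requirement ``degree sum $\le 1$'' rather than the paper's more rigid ``missing at $v$ and not a cherry center at $u$.'' (One side observation: your argument never actually uses the hypothesis $\gamma\in L_\sigma(v)$, so your proof establishes the statement for an arbitrary $\gamma$, whereas the paper's proof genuinely relies on $\gamma\in L_\sigma(v)$ to exclude $\gamma$ in its first case.) Both arguments rest on the same underlying potential bookkeeping; yours just collapses the intermediary lemmas into one counting step.
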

\begin{proof}
If $v$ is the center of exactly $m$ cherries, by \Cref{prop:numbermissing}, $v$ has at least  $m+1$ missing colors. 

If there is $\beta\notin L_{\sigma}(v)$ such that $u$ is not the center of a $\beta$-cherry, then, by \Cref{prop:movecherry}, we can monotonically recolor $uv$ into $\beta$. Since $\alpha,\gamma\in L_{\sigma}(v)$, $\beta$ is different from them.

Otherwise, for any $\zeta \notin L_{\sigma}(v)$, $u$ is the center of a $\zeta$-cherry. Thus, there are at least $m+1$ cherries around $u$, and, by \Cref{prop:numbermissing}, it has at least $m+2$ missing colors. 
Since $v$ is the center of $m$ cherries, at least two of the missing colors around $u$ do not form a cherry centered at $v$.
In particular, among these two colors, one is different from $\gamma$, we fix this to be $\beta$. 
Since $\beta \notin L_{\sigma}(u)$, we have that $\beta \neq \alpha$. Hence, we can monotonically recolor $uv$ into $\beta$ by \Cref{prop:movecherry}.
\end{proof}

Using the previous lemmas, in the following two results we show that, in a cherry coloring, if a bichromatic component containing a cherry has minimum degree one, we can decrease its potential.

\begin{lemma}\label{cl:pathbi}
Let $\sigma$ be a coloring, $c = (u,v,w)$ be a $\gamma_1$-cherry, let $\gamma_2\notin L_{\sigma}(v)$ and $H$ the $(\gamma_1, \gamma_2)$-component of $\sigma$ containing $c$. If $H$ contains a path $P = (x_1, \dots, x_\ell = v)$ such that $x_2, \dots, x_\ell$ have degree exactly $2$ in $H$, then we can monotonically recolor the edges of $P$ such that either we strictly reduce the potential of the initial coloring $\sigma$ or we obtain a $\gamma_i$-cherry centered at $x_2$, for some $i\in \{1,2\}$, in at most $\ell$ steps. 
\end{lemma}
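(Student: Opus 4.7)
The plan is to iteratively shift the $\gamma_1$-cherry at $v$ one vertex at a time along $P$ toward $x_2$. I would first pin down the forced local structure near $v$: since $v$ has $\gamma_1$-degree $2$ (it is the cherry center) and $\gamma_2$-degree $0$ (as $\gamma_2\notin L_\sigma(v)$), both of its $H$-edges are colored $\gamma_1$, and one of them is $x_{\ell-1}x_\ell$ on $P$. Because $\sigma$ is a cherry coloring and $x_{\ell-1}$ already carries the $\gamma_1$-edge to $v$, no further $\gamma_1$-edge can be incident to $x_{\ell-1}$; combined with the $H$-degree-$2$ constraint, this forces the next edge $x_{\ell-2}x_{\ell-1}$ on $P$ (when $\ell\geq 3$) to be colored $\gamma_2$.

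The shift step at iteration $i\in\{1,\dots,\ell-2\}$ consists of recoloring $e_{\ell-i}:=x_{\ell-i}x_{\ell-i+1}$ from the current cherry color $\tau$ to the other color $\tau'\in\{\gamma_1,\gamma_2\}$. Let $\mu$ denote the color in $\sigma$ of the still-untouched edge $e_{\ell-i-1}=x_{\ell-i-1}x_{\ell-i}$. Before the recoloring we have $d_\tau(x_{\ell-i+1})=2$, $d_{\tau'}(x_{\ell-i+1})=0$ (from the current cherry), and $d_\tau(x_{\ell-i})=1+[\mu=\tau]$, $d_{\tau'}(x_{\ell-i})=[\mu=\tau']$ (using the $H$-degree-$2$ condition). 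Lemma~\ref{obs:potential counting} then gives potential change $1+[\mu=\tau]-[\mu=\tau']\geq 0$, so the step is always monotonic: it leaves the potential unchanged (``Type I'') when $\mu=\tau'$, in which case both $H$-edges of $x_{\ell-i}$ now have color $\tau'$, preparing the next iteration; and it strictly decreases the potential by $2$ (``Type II'') when $\mu=\tau$, in which case we stop with the first conclusion of the lemma.

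Iterating, either a Type II step occurs at some point (done) or all $\ell-2$ shifts are Type I and we end with both $H$-edges at $x_2$ colored some $\tau'$. The $\tau'$-component containing these two edges cannot extend through $x_3$ (since $e_3$ was recolored to the opposite color, so $d_{\tau'}(x_3)=1$), so its size is determined by $x_1$'s original $\tau'$-degree, which is $1$ or $2$ as $\sigma$ is a cherry coloring. If it is $1$, the $\tau'$-component is exactly $\{e_1,e_2\}$ and we have obtained the desired $\tau'$-cherry centered at $x_2$. If it is $2$ (i.e., $x_1$ carries a $\tau'$-edge outside $P$), the component has three edges; then Lemma~\ref{lem:monosize2} provides one further monotonic step---which will recolor $e_1=x_1x_2$, the unique edge whose endpoints both have $\tau'$-degree $2$ and which lies on $P$---strictly decreasing the potential. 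The base case $\ell=2$ is trivial since the $\gamma_1$-cherry is already centered at $x_2=v$.

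The total number of steps is therefore at most $\ell-2+1=\ell-1\leq\ell$. The main subtlety I would expect to address is exactly the fact that $x_1$'s $H$-degree is unconstrained in the hypothesis: a $\tau'$-branch at $x_1$ causes the naive shift to merge $x_2$'s new ``cherry'' with a pre-existing cherry at $x_1$ into an oversized monochromatic component rather than a true cherry, and the rescue is the extra Lemma~\ref{lem:monosize2} step. An analogous branch issue at an internal vertex $x_{\ell-i-1}$ with $\ell-i-1\geq 2$ is harmless for the same computation: if both $H$-edges at such a vertex were colored $\tau_i$, then the subsequent shift at step $i+1$ would automatically be Type II and yield the strict decrease within the step budget.
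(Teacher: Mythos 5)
Your proof takes essentially the same route as the paper's: shift the cherry one edge at a time along $P$, noting that each shift is monotonic, until you reach $x_2$ or the potential strictly drops. The paper phrases this as an induction on $\ell$ and cites \Cref{prop:movecherry}; you unroll the induction into an explicit iteration and verify monotonicity directly from \Cref{obs:potential counting}. Both are correct, and your observation that the key dichotomy (Type~I versus Type~II) is governed entirely by the $H$-degree-$2$ condition, not by $\sigma$ being a cherry coloring, is a nice clarification.

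One small caveat worth flagging: you twice invoke the hypothesis that ``$\sigma$ is a cherry coloring,'' which is not part of the lemma's statement. The first invocation is harmless, since the conclusion you draw (no second $\gamma_1$-edge at $x_{\ell-1}$) follows already from $c$ being a $\gamma_1$-cherry. The second invocation (``$x_1$'s $\tau'$-degree is $1$ or $2$'') is also not actually needed for your argument to close: what you really use is only that the terminal $\tau'$-component around $x_2$ has either exactly two edges (giving the desired cherry) or at least three edges (letting \Cref{lem:monosize2} finish), and that dichotomy holds regardless of $x_1$'s degree. The paper avoids this issue more cleanly by opening the inductive step with ``if $\sigma$ has a monochromatic component of size $\geq 3$, apply \Cref{lem:monosize2}''; you could either add that sentence or simply drop the cherry-coloring language, since as written it reads as relying on an unstated hypothesis.
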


\begin{proof}
We will prove it by induction on $\ell \geq 2$.  If $\ell=2$, then $\sigma$ satisfies the properties of the desired coloring. 
Suppose that $\ell\geq 3$ and note that $x_{\ell-1}$ is either $u$ or $w$, by symmetry, assume it is $u$.
If $\sigma$ contains a monochromatic component of size at least 3, we can decrease the potential with one recoloring using \Cref{lem:monosize2}. So we may assume that $\sigma$ is a cherry coloring and 
$\sigma(x_{\ell-2} u) = \gamma_{2}$. 
By \Cref{prop:movecherry}, we can monotonically recolor $uv$ into $\gamma_{2}$. 
Now, we have a $\gamma_2$-cherry $c'=(x_{\ell-2}, u,v)$ with $\gamma_1\notin L_{\sigma}(u)$ and the bichromatic component contains $P'=(x_1, \dots, x_{\ell-1} = u)$ with $x_2\dots, x_{\ell-1}$ of degree $2$. 
By induction, we can monotonically recolor edges to either decrease the potential or obtain a $\gamma_i$-cherry centered at $x_2$ for some $i\in\{1,2\}$, in at most $\ell-1$ steps. The desired result follows.
\end{proof}

\begin{lemma}\label{lem:bichromaticcycle}
Let $\sigma$ be a cherry coloring and $c$ be a $\gamma_1$-cherry such that $\gamma_2$ is missing around the center of $c$ and the $(\gamma_1, \gamma_2)$-component containing $c$ has a vertex of degree $1$.
Then, we can decrease the potential in $O(n)$ steps.
\end{lemma}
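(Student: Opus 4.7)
The plan is to use \Cref{cl:pathbi} to shift the cherry $c$ along the bichromatic component $H$ until it becomes adjacent to the given leaf of $H$, and then apply one final monotonic recoloring to destroy the cherry, thereby decreasing the number of cherries (and hence the potential) by exactly one.

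First observe that the center $v$ of $c$ has degree exactly $2$ in $H$: since $\gamma_2 \notin L_\sigma(v)$, the only $H$-edges incident with $v$ are the two $\gamma_1$-cherry edges. The goal is thus to produce a path $P = (x_1, x_2, \ldots, x_\ell)$ inside $H$ such that $x_1$ is a degree-$1$ vertex of $H$ (as guaranteed by hypothesis), each of $x_2, \ldots, x_\ell$ has degree exactly $2$ in $H$, and $x_\ell$ is the center of a cherry in some color $\gamma \in \{\gamma_1,\gamma_2\}$ with the complementary color missing at $x_\ell$. Such a path is obtained by starting at a leaf of $H$ and walking greedily through degree-$2$ vertices: at a degree-$2$ non-cherry-center vertex the two incident $H$-edges must use different colors (otherwise the vertex would itself be a cherry center), so the walk proceeds uniquely; at a degree-$2$ cherry-center vertex both $H$-edges share the cherry color, the complementary color is missing, and the walk may legitimately terminate. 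Since $v$ itself is a valid terminating vertex and any branch vertex of $H$ (degree $\ge 3$) is forced to be a cherry center in $\gamma_1$ or $\gamma_2$, a careful choice of starting leaf (or minor local surgery near a branch vertex) yields $P$.

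With $P$ in hand, we apply \Cref{cl:pathbi}: either the potential drops strictly within those at most $\ell - 1 \le n$ monotonic recolorings and we are done, or a $\gamma_i$-cherry ends up centered at $x_2$ for some $i \in \{1,2\}$. In the latter case, $x_2$ has degree $2$ in $H$ with both incident $H$-edges now colored $\gamma_i$, so $\gamma_{3-i} \notin L(x_2)$. Since the recolorings along $P$ only exchanged $\gamma_1$ and $\gamma_2$ on $H$-edges, the underlying bichromatic subgraph is preserved, so $x_1$ is still a leaf of $H$ whose unique incident $H$-edge $x_1 x_2$ is now colored $\gamma_i$, whence $\gamma_{3-i} \notin L(x_1)$ as well. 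Recoloring $x_1 x_2$ from $\gamma_i$ to $\gamma_{3-i}$ is therefore monotonic, and by \Cref{obs:potential counting} it decreases the potential by $(2+1) - 0 - 2 = 1$.

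The hardest step is producing the path $P$: the greedy walk from a naively chosen leaf might terminate at a branch vertex $w$ of degree $\ge 3$ in $H$, which is forced to be a cherry center but whose complementary color is \emph{not} missing (since $w$ carries $H$-edges of both $\gamma_1$ and $\gamma_2$), so \Cref{cl:pathbi} cannot be invoked at $w$ directly. The main technical content of the proof is therefore to show that this pathological termination can always be avoided, either by choosing the starting leaf carefully or by inserting a bounded number of preliminary monotonic recolorings that bring a cherry to a degree-$2$ position reachable from some leaf by a clean degree-$2$ path. In all cases the total number of recolorings is $\ell + 1 \le n + 1 = \Oh(n)$, as required.
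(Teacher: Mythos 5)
Your high-level plan matches the paper's: construct a path $P$ from a degree-$1$ vertex of $H$ to the center of a cherry along which all interior vertices have degree $2$ in $H$, apply \Cref{cl:pathbi} to slide the cherry down to the penultimate vertex $x_2$, and then recolor $x_1x_2$ to the complementary color, killing the cherry outright. That last step (via \Cref{obs:potential counting}, as you compute) is exactly the paper's ending, and it is correct.

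The problem is that you explicitly stop short of the one genuinely technical step. You write that ``the main technical content of the proof is therefore to show that this pathological termination can always be avoided, either by choosing the starting leaf carefully or by inserting a bounded number of preliminary monotonic recolorings,'' but you do not actually do either. Neither of your two escape routes works as stated: the hypothesis only guarantees \emph{one} degree-$1$ vertex $w$, so there may be no ``other leaf'' to choose, and every path from $w$ towards a cherry may pass through a branch vertex $y$ of degree $3$ or $4$. What is actually needed is the active surgery that the paper carries out using \Cref{lem:exchangecherry}: fix a shortest (hence induced) path $P$ from $v$ to $w$; for every cherry $(x,y,z)$ with $y\in V(P)$ and $z\notin V(P)$, monotonically recolor $yz$ out of $\{\gamma_1,\gamma_2\}$ (by \Cref{lem:exchangecherry}, this is always possible and does not touch $P$ nor create new $\gamma_1$/$\gamma_2$-cherries). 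This kills every ``off-path'' branch coming from a cherry. If after this there is still a degree-$3$ vertex $y$ on $P$, it must be the center of a cherry $(x,y,z)$ contained in $P$; taking the closest such $y$ to $v$ and recoloring the leg $xy$ towards $v$ out of $\{\gamma_1,\gamma_2\}$ creates a \emph{new} degree-$1$ endpoint $x$ whose segment to $v$ is already degree-$2$ clean. This is the point of the construction that you wave away; it, not a lucky choice of leaf, is what yields the path needed by \Cref{cl:pathbi}, and it costs $O(n)$ extra steps (one per removed cherry), which your stated bound $\ell+1$ omits. Without this piece the argument is incomplete.
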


\begin{proof}
Let $H$ be the $(\gamma_1, \gamma_2)$-component containing $c$, $w$ a vertex of degree $1$ in it, and $v$ the center of $c$.
\Cref{fig:movingbi} illustrates this proof. 
Consider a shortest path $P$ in $H$ from $v$ to $w$. 
For each cherry $c'=(x,y,z)$ in $H$ with $y\in V(P)$ but $z\notin V(P)$, we use \Cref{lem:exchangecherry} to recolor $yz$ into a color different from $\gamma_1,\gamma_2$; this removes the cherry $c'$ without affecting $P$ or adding any new cherry in colors $\gamma_1$ or $\gamma_2$. After that, let $y$ be the vertex of degree $3$ in $P$ that is closest to $v$ (if it exists). Then $y$ is the center of a cherry $c'=(x,y,z)$ contained in $P$. Say that $x$ is the vertex closer to $v$ in $c$. Using \Cref{lem:exchangecherry} we can monotonically recolor $xy$ into a color different from $\gamma_1,\gamma_2$. After these steps, we obtain a coloring $\sigma'$ containing $c$ such that the $(\gamma_1, \gamma_2)$-component $H'$ containing $c$ also contains a path $P'=(x_1=x,x_2,\dots, x_\ell=v)$ where $x_2,\dots,x_\ell$ have degree $2$ and $x_1$ has degree $1$.

By \Cref{cl:pathbi}, we can monotonically recolor edges of $P'$ such that we either decrease the potential of the coloring $\sigma'$ or obtain a coloring $\sigma''$ with a $\gamma_i$-cherry $c''$ centered at $x_2$ in at most $\ell$ steps, for $i\in \{1,2\}$. 
In the first case, we can conclude.
Otherwise, observe that the $(\gamma_1,\gamma_2)$-component containing $c''$ is still $H'$.
Since $x_1$ has degree $1$ and $x_2$ has degree $2$ in $H'$, we have $\gamma_j\notin L_{\sigma}({x_1})\cup L_{\sigma}({x_2})$, for $j\neq i$. Recoloring $x_1x_2$ into $\gamma_{j}$ decreases the potential of $\sigma''$, concluding the proof. 

As there are at most $2n$ cherries in $H$, in the process of decreasing the potential, we have done at most $ 2n+ |P'|= O(n)$ recolorings.
\end{proof}

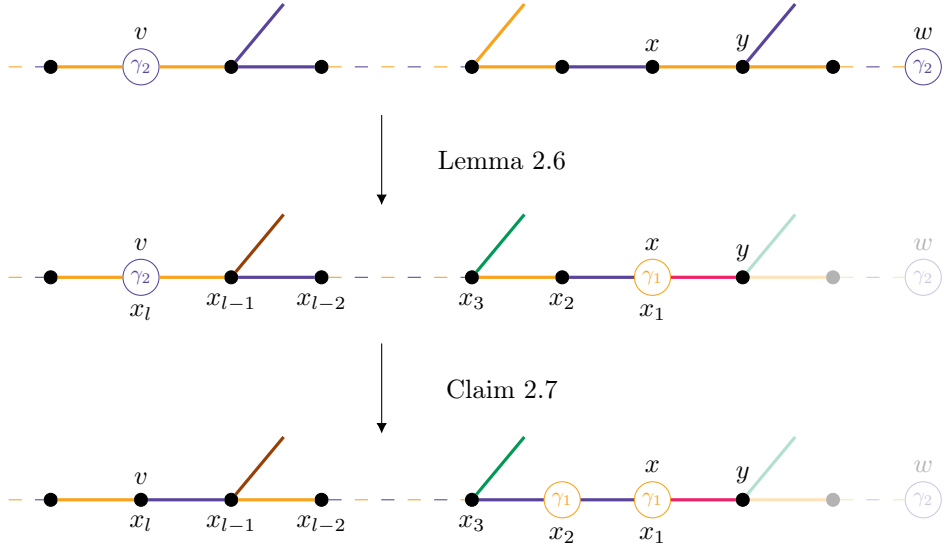
\begin{figure}[hbtp]
        \begin{center}
        \tikzstyle{vertex}=[circle,draw, minimum size=7pt, scale=0.7, inner sep=1pt, fill = black]
        \tikzstyle{v}=[circle,draw, minimum size=12pt, scale=0.8, inner sep=2pt]
        \tikzstyle{fleche}=[->,>=latex]
        \tikzstyle{garrow}=[arrows = {-Latex[width=4pt, length=4pt]}]
        \tikzstyle{labell}=[opacity=0,text opacity=1, scale =1]
        \begin{tikzpicture}[scale=0.8]

                \node (a1) at (0, 0) [v, label={90:$v$}, \colorg] {$\gamma_2$};
                \node (a2) at (1.5, 0) [vertex]{};  
                \node (a3) at (-1.5, 0) [vertex] {};
                \node (a4) at (3, 0) [vertex] {};
                \node (a5) at (5.5, 0) [vertex] {};
                \node (a6) at (7, 0) [vertex] {};
                \node (a7) at (8.5, 0) [vertex, label={90:$x$}] {};

                \node (b1) at (2.5,1.2) [labell] {};
                \node (b2) at (6.5,1.2) [labell] {};

                \draw[very thick, \colord] (a1) to (a2);
                \draw[very thick, \colord] (a1) to (a3);
                \draw[very thick, \colorg] (a2) to (a4);

                \draw[dashone, draw=\colord, postaction={draw=\colorg, dash phase= \dash}] (a4) to (a5);
                
                \draw[very thick, \colord] (a5) to (a6);
                \draw[very thick, \colorg] (a6) to (a7);
                \draw[dashone, draw=\colord, postaction={draw=\colorg, dash phase= \dash}] (-2.2,0) to (a3);

                \draw[very thick, \colorg] (a2) to (b1);
                \draw[very thick, \colord] (a5) to (b2);

                \node (a8) at (10, 0) [vertex, label={90:$y$}] {};
                \node (a9) at (11.5, 0) [vertex] {};
                \node (a0) at (13, 0) [v, \colorg, label={90:$w$}] {$\gamma_2$};
                
                \node (b3) at (11,1.2) [labell] {};
                \draw[very thick, \colord] (a7) to (a8);
                \draw[very thick, \colord] (a8) to (a9);
                
                \draw[very thick, \colorg] (a8) to (b3);
                \draw[dashone, draw=\colord, postaction={draw=\colorg, dash phase= \dash}] (a9) to (a0);

                \draw[garrow] (4, -0.8) to (4, -2.3);
                
                \node (b2) at (6,-1.55) [labell] {Lemma~\ref{lem:exchangecherry}};
                
            \tikzset{yshift=-3.5cm};
            
               \node (a1) at (0, 0) [v, \colorg , label={-90:$x_{l}$}, label={90:$v$}] {$\gamma_2$};
                \node (a2) at (1.5, 0) [vertex, label={-90:$x_{l-1}$}]{};  
                \node (a3) at (-1.5, 0) [vertex] {};
                \node (a4) at (3, 0) [vertex, label={-90:$x_{l-2}$}] {};
                \node (a5) at (5.5, 0) [vertex, label={-90:$x_3$}] {};
                \node (a6) at (7, 0) [vertex, label={-90:$x_2$}] {};
                \node (a7) at (8.5, 0) [v, \colord, label={-90:$x_1$}, label={90:$x$}] {$\gamma_1$};

                \node (b1) at (2.5,1.2) [labell] {};
                \node (b2) at (6.5,1.2) [labell] {};

                \draw[very thick, \colord] (a1) to (a2);
                \draw[very thick, \colord] (a1) to (a3);
                \draw[very thick, \colorg] (a2) to (a4);

                \draw[dashone, draw=\colord, postaction={draw=\colorg, dash phase= \dash}] (a4) to (a5);
                
                \draw[very thick, \colord] (a5) to (a6);
                \draw[very thick, \colorg] (a6) to (a7);
                
                \draw[dashone, draw=\colord, postaction={draw=\colorg, dash phase= \dash}] (-2.2,0) to (a3);

                \draw[very thick, \colorc] (a2) to (b1);
                \draw[very thick, \colorf] (a5) to (b2);

                \node (a8) at (10, 0) [vertex, label={90:$y$}] {};

                \draw[very thick, \colore] (a7) to (a8);

            \begin{scope}[transparency group, opacity=0.3]

                \node (a9) at (11.5, 0) [vertex] {};
                \node (a0) at (13, 0) [v, \colorg, label={90:$w$}] {$\gamma_2$};
                
                \node (b3) at (11,1.2) [labell] {};
                \draw[very thick, \colord] (a8) to (a9);
                
                \draw[very thick, \colorf] (a8) to (b3);
                \draw[dashone, draw=\colord, postaction={draw=\colorg, dash phase= \dash}] (a9) to (a0);

            \end{scope}
                
                \draw[garrow] (4, -1.1) to (4, -2.6);
                
                \node (b2) at (6,-1.85) [labell] {Claim~\ref{cl:pathbi}};
            
            \tikzset{yshift=-3.7cm};
            
               \node (a1) at (0, 0) [vertex,label={-90:$x_{l}$}, label={90:$v$}] {};
                \node (a2) at (1.5, 0) [vertex, label={-90:$x_{l-1}$}]{};  
                \node (a3) at (-1.5, 0) [vertex] {};
                \node (a4) at (3, 0) [vertex, label={-90:$x_{l-2}$}] {};
                \node (a5) at (5.5, 0) [vertex, label={-90:$x_3$}] {};
                \node (a6) at (7, 0) [v, label={-90:$x_2$}, \colord] {$\gamma_1$};
                \node (a7) at (8.5, 0) [v, \colord, label={-90:$x_1$}, label={90:$x$}] {$\gamma_1$};

                \node (b1) at (2.5,1.2) [labell] {};
                \node (b2) at (6.5,1.2) [labell] {};

                \draw[very thick, \colorg] (a1) to (a2);
                \draw[very thick, \colord] (a1) to (a3);
                \draw[very thick, \colord] (a2) to (a4);

                \draw[dashone, draw=\colorg, postaction={draw=\colord, dash phase= \dash}] (a4) to (a5);
                
                \draw[very thick, \colorg] (a5) to (a6);
                \draw[very thick, \colorg] (a6) to (a7);
                \draw[dashone, draw=\colord, postaction={draw=\colorg, dash phase= \dash}] (-2.2,0) to (a3);

                \draw[very thick, \colorc] (a2) to (b1);
                \draw[very thick, \colorf] (a5) to (b2);
                
                \node (a8) at (10, 0) [vertex, label={90:$y$}] {};

                \draw[very thick, \colore] (a7) to (a8);

            \begin{scope}[transparency group, opacity=0.3]

                \node (a9) at (11.5, 0) [vertex] {};
                \node (a0) at (13, 0) [v, \colorg, label={90:$w$}] {$\gamma_2$};
                
                \node (b3) at (11,1.2) [labell] {};
                \draw[very thick, \colord] (a8) to (a9);
                
                \draw[very thick, \colorf] (a8) to (b3);
                \draw[dashone, draw=\colord, postaction={draw=\colorg, dash phase= \dash}] (a9) to (a0);

            \end{scope}

        \end{tikzpicture}
        \end{center}
        \caption{Illustration of \Cref{lem:bichromaticcycle}. A $\gamma_1$-cherry $c$ (orange here) whose center $v$ misses the color $\gamma_2$ (violet here). The $(\gamma_1, \gamma_2)$-component $H$ containing $c$ has a vertex $w$ of degree one. First, we apply \Cref{lem:exchangecherry} to construct in $H$ a path $P$ of vertices of degree at most $2$ from $v$ to a vertex $x$ of degree $1$. Then, we apply \Cref{cl:pathbi} on $P$ to "move" the center of the cherry from $v$ to $x_2$. Finally, we can just recolor $x_1x_2$ to decrease the potential.}
        \label{fig:movingbi}
\end{figure}

As  a direct consequence we get the following.
\begin{corollary}\label{coro:bichromaticcycle}
Let $\sigma$ be a cherry $k$-coloring and let $H$ be a bichromatic component of $\sigma$. If $c,c'$ are cherries in $H$ whose centers have degree $2$, then we can decrease the potential in $O(n)$ steps.
\end{corollary}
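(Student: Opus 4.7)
The plan is to reduce the statement to a single invocation of \Cref{lem:bichromaticcycle}, preceded by at most one preparatory recoloring. Let $H$ denote the $(\gamma_1,\gamma_2)$-bichromatic component of $\sigma$ containing both $c$ and $c'$. The degree-$2$ hypothesis on the centers $v,v'$ forces each of them to miss (in $\sigma$) the color of its bichromatic pair opposite to the color of its own cherry, so the hypothesis on the missing color in \Cref{lem:bichromaticcycle} is already satisfied at $v$; the only ingredient that might still be missing is a vertex of degree $1$ in the bichromatic component of $c$. If $H$ already has such a vertex, \Cref{lem:bichromaticcycle} applied to $c$ directly delivers the desired potential decrease in $O(n)$ steps.

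Otherwise $H$ has minimum degree at least $2$, and the strategy is to manufacture a degree-$1$ vertex using $c'$. Writing $c'$ as a $\gamma_j$-cherry with endpoints $u',w'$, the center $v'$ misses $\gamma_{3-j}$; since every vertex of $H$ has degree $\geq 2$, each of $u',w'$ must have a $\gamma_{3-j}$-edge, so $\gamma_{3-j}\in L_\sigma(u')\cap L_\sigma(w')$. The next step is to choose $z\in\{u',w'\}$ so that the removal of the edge $v'z$ does not disconnect $v'$ from $c$ in the bichromatic graph: if $v'$ is not a cut vertex of $H$ either choice works; otherwise $u',w'$ lie in distinct components of $H-v'$ and we pick $z$ to be the one whose component does not contain $c$. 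Applying \Cref{lem:exchangecherry} with $\alpha=\gamma_j$ and $\gamma=\gamma_{3-j}\in L_\sigma(z)$ then produces a monotonic recoloring of $v'z$ into some color $\beta\notin\{\gamma_j,\gamma_{3-j}\}=\{\gamma_1,\gamma_2\}$. The proof of \Cref{lem:exchangecherry} (which internally invokes \Cref{prop:movecherry}) also ensures that no monochromatic component of size $\geq 3$ is created, so the resulting coloring $\sigma'$ is still a cherry coloring.

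After this single recoloring, the new $(\gamma_1,\gamma_2)$-bichromatic component $H'$ containing $c$ still contains $v'$ (by the choice of $z$), and $v'$ now has degree exactly $1$ in $H'$, being incident only to the surviving cherry edge $v'\bar z$ with $\bar z\in\{u',w'\}\setminus\{z\}$. Applying \Cref{lem:bichromaticcycle} to the cherry $c$ inside $H'$, using $v'$ as its degree-$1$ vertex, decreases the potential in $O(n)$ further steps, for a grand total of $1+O(n)=O(n)$ monotonic recolorings. The step I expect to require the most care is the cut-vertex subcase: one has to verify that, after removing the ``away-from-$c$'' edge of $c'$, the vertex $v'$ indeed ends up in the bichromatic component containing $c$ (and not on the other side of the cut); once this is checked, the rest is a clean reduction to the two lemmas already established above.
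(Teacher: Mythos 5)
Your proof follows the same strategy as the paper: recolor one leg of $c'$ to a color outside $\{\gamma_1,\gamma_2\}$ via \Cref{lem:exchangecherry}, thereby creating a degree-$1$ vertex (the former center $v'$) in the bichromatic component of $c$, and then conclude with \Cref{lem:bichromaticcycle}. The one place where you diverge is in choosing which leg of $c'$ to recolor so that $v'$ remains in the component of $c$: you argue via cut vertices, whereas the paper simply recolors the leg $yz$ of $c'=(x,y,z)$ whose far endpoint $z$ satisfies $d_H(v,z)\ge d_H(v,x)$, where $v$ is the center of $c$. This choice directly forces $y$ to stay connected to $c$ after removing $yz$ (if it did not, every $v$--$x$ path in $H$ would have to pass through $z$ and then $y$, giving $d_H(v,x)\ge d_H(v,z)+2$), and avoids the cut-vertex case analysis you flag as delicate. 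Both resolutions are correct; the distance-based one is a one-liner.

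One sentence in your write-up is not actually true: \Cref{lem:exchangecherry} (via \Cref{prop:movecherry}) does \emph{not} guarantee that the resulting coloring is still a cherry coloring. Recoloring $uv$ into $\beta$ only ensures $d_{\sigma,\beta}(u)+d_{\sigma,\beta}(v)\le 1$ \emph{before} the recoloring; if, say, $u$ has one $\beta$-edge $ua$ and $a$ is itself the center of a $\beta$-cherry $(u,a,b)$, then after the recoloring the $\beta$-component containing $uv$ is the three-edge path $v,u,a,b$, which is not a cherry. This does not sink your argument --- whenever a monochromatic component of size at least $3$ appears, \Cref{lem:monosize2} immediately yields a strict potential decrease in a single step, which is exactly the fallback the paper builds into the induction of \Cref{cl:pathbi} --- but it should be stated as a fallback rather than asserted as an impossibility.
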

\begin{proof}
Let $c=(u,v,w)$ and $c'=(x,y,z)$, and assume that, in $H$, the distance from $v$ to $z$ is at least the distance from $v$ to $x$. By \Cref{lem:exchangecherry}, monotonically recolor $yz$ into a color different from the ones in $H$. Now $y$ has degree $1$ in the bichromatic component containing $c$, and we can conclude by \Cref{lem:bichromaticcycle}.
\end{proof}

\subsection{Color-shift digraph}

Using the previous properties, we are ready to show how one can decrease the potential of a non-proper cherry coloring using monotonic recolorings. 
Our algorithm is based on the analysis of an auxiliary digraph that indicates the different options we have to monotonically recolor the edges incident with a vertex. 
The \emph{color-shift digraph $\cD_\sigma(v)$ of $\sigma$ around $v$} is a digraph whose vertices are colors in $[k]$ and there is an arc $(\alpha,\beta)$ if there is $uv\in E(G)$ such that $\sigma(uv) = \alpha$ and $\beta\notin L_{\sigma}(u)$ (see \Cref{fig:colshift} for an example). We say that a color $\alpha$ is \emph{marked} in $\cD_{\sigma}(v)$ if there is an $\alpha$-cherry centered at $v$.
Since there is always a color missing at any $u$, $\alpha$ has out-degree $0$ in $\cD_\sigma(v)$ if and only if $\alpha\notin L_{\sigma}(v)$. Also, $\cD_\sigma(v)$ has at most $d(v)$ colors with outdegree at least one so a path between two colors or a cycle (if it exists) has length at most $d(v)$.

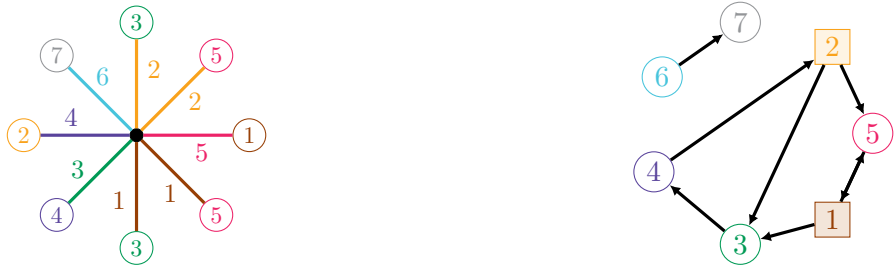
\begin{figure}[h]
       
        \tikzstyle{vertex}=[circle,draw, minimum size=7pt, scale=0.7, inner sep=1pt, fill = black]
        \tikzstyle{v}=[circle,draw, minimum size=12pt, scale=0.9, inner sep=2pt]
        
        \tikzstyle{v2}=[circle,draw, minimum size=12pt, scale=1.1, inner sep=2pt]
        \tikzstyle{vc}=[draw, minimum size=12pt, scale=1.1, inner sep=2pt]
        \tikzstyle{fleche}=[->,>=latex]
        \tikzstyle{garrow}=[arrows = {-Latex[width=4pt, length=4pt]}]
        \tikzstyle{labell}=[opacity=0,text opacity=1, scale =1]
        
        \begin{subfigure}[t]{.5\textwidth}
            \centering
            \begin{tikzpicture}[scale=1]
    
            \node (a0) at (0,0) [vertex] {};
    
            \node (a4) at (-45:1.5cm) [v, \colore] {$5$};
            \node (a5) at (-90:1.5cm) [v, \colorf] {$3$};
            \node (a1) at (45:1.5cm) [v, \colore] {$5$};
            \node (a2) at (90:1.5cm) [v, \colorf] {$3$};
            \node (a3) at (0:1.5cm) [v, \colorc] {$1$};
            \node (a6) at (135:1.5cm) [v, \colora] {$7$};
            \node (a8) at (180:1.5cm) [v, \colord] {$2$};
            \node (a9) at (-135:1.5cm) [v, \colorg] {$4$};

            \draw[very thick, \colorc] (a0) to (a4);
            \draw[very thick, \colorc] (a0) to (a5);
            \draw[very thick, \colord] (a0) to (a1);
            \draw[very thick, \colord] (a0) to (a2);
            \draw[very thick, \colore] (a0) to (a3);
            \draw[very thick, \colorb] (a0) to (a6);
            \draw[very thick, \colorg] (a0) to (a8);
            \draw[very thick, \colorf] (a0) to (a9);

            \node (h1) at (-60:.9cm) [labell, \colorc] {$1$};
            \node (h2) at (-105:.9cm) [labell, \colorc] {$1$};
            \node (h1) at (30:.9cm) [labell, \colord] {$2$};
            \node (h2) at (75:.9cm) [labell, \colord] {$2$};
            \node (h3) at (-15:.9cm) [labell, \colore] {$5$};
            \node (h6) at (120:.9cm) [labell,\colorb] {$6$};
            \node (h8) at (165:.9cm) [labell, \colorg] {$4$};
            \node (h9) at (-150:.9cm) [labell, \colorf] {$3$};
            
            \end{tikzpicture}
        \end{subfigure}
        \hfill
        \begin{subfigure}[t]{.5\textwidth}
            \centering
            \begin{tikzpicture}[scale=1]
    
               \node (a1) at (-50:1.5cm) [vc, \colorc, fill=\colorc!10] {$1$};
                \node (a3) at (-100:1.5cm) [v2, \colorf] {$3$};
                \node (a5) at (0:1.5cm) [v2, \colore] {$5$};
                \node (a2) at (50:1.5cm) [vc, \colord, fill=\colord!10] {$2$};
                \node (a8) at (100:1.5cm) [v2, \colora] {$7$};
                \node (a6) at (150:1.5cm) [v2, \colorb] {$6$};
                \node (a4) at (200:1.5cm) [v2, \colorg] {$4$};

                \draw[very thick, garrow] (a1) to (a5);
                \draw[very thick, garrow] (a1) to (a3);
                \draw[very thick, garrow] (a2) to (a5);
                \draw[very thick, garrow] (a2) to (a3);
                \draw[very thick, garrow] (a3) to (a4);
                \draw[very thick, garrow] (a4) to (a2);
                \draw[very thick, garrow] (a6) to (a8);
                \draw[very thick, garrow] (a5) to (a1);
            \end{tikzpicture}
        \end{subfigure}
        
        \caption{An example of a coloring $\sigma$ (on the left) around a vertex $v$, and the color-shift digraph of $\sigma$ around $v$ (on the right) where marked vertices are represented as squares. We do not represent isolated vertices in $\cD_\sigma(v)$. }
        \label{fig:colshift}
\end{figure}

\begin{lemma}\label{lem:outdegree0}
Let $\sigma$ be a cherry $k$-coloring and $v\in V(G)$. If there are two colors $\alpha$ and $\beta$ in $\cD_\sigma(v)$ such that $\beta$ is marked, $\alpha$ has out-degree $0$, and there is a path from $\beta$ to $\alpha$, then we can decrease the potential in at most $d(v)$ monotonic steps.
\end{lemma}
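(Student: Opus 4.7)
The plan is to shift the ``missing color'' $\alpha$ along the given path back towards $\beta$, performing one monotonic recoloring at a time and obtaining a strict decrease by the time we reach $\beta$. Write the path as $\beta=\gamma_0 \to \gamma_1 \to \cdots \to \gamma_\ell = \alpha$, so $\ell\leq d(v)$. For each $i\in\{0,\dots,\ell-1\}$, the arc $(\gamma_i,\gamma_{i+1})$ witnesses a vertex $u_i$ with $\sigma(u_iv)=\gamma_i$ and $\gamma_{i+1}\notin L_\sigma(u_i)$. Since the colors $\gamma_0,\dots,\gamma_{\ell-1}$ are pairwise distinct, the vertices $u_0,\dots,u_{\ell-1}$ are pairwise distinct as well (in a simple graph, edges at $v$ of distinct colors go to distinct neighbors). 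Step $j$ of the procedure, for $j=1,2,\dots$, recolors the edge $u_{\ell-j}v$ from $\gamma_{\ell-j}$ to $\gamma_{\ell-j+1}$, and we stop as soon as some step strictly decreases the potential.

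I would prove by induction on $j$ that, as long as no previous step was strict, step $j$ is a valid monotonic recoloring and, in the non-strict case, $\gamma_{\ell-j}$ is missing at $v$ in the resulting coloring $\sigma_j$. The key input is that at the start of step $j$, the color $\gamma_{\ell-j+1}$ is missing both at $u_{\ell-j}$ and at $v$: at $u_{\ell-j}$ because no edge incident to $u_{\ell-j}$ has been previously modified (the $u_i$'s are distinct) together with $\gamma_{\ell-j+1}\notin L_\sigma(u_{\ell-j})$ from the arc; at $v$ either by the hypothesis that $\alpha$ has out-degree $0$ (when $j=1$) or by the inductive hypothesis (when $j\geq 2$). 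Given this, \Cref{obs:potential counting} gives
\[
\psi(\sigma_{j-1})-\psi(\sigma_j)=d_{\sigma_{j-1},\gamma_{\ell-j}}(u_{\ell-j})+d_{\sigma_{j-1},\gamma_{\ell-j}}(v)-2,
\]
which is $\geq 0$ because the edge $u_{\ell-j}v$ contributes $1$ to each degree, making the step monotonic. The step is strict exactly when the $\gamma_{\ell-j}$-component through $u_{\ell-j}v$ contains a second edge at $u_{\ell-j}$ or $v$; otherwise both degrees equal $1$, so $\gamma_{\ell-j}$ becomes missing at $v$ in $\sigma_j$, completing the induction.

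The termination argument then amounts to observing that step $\ell$ must be strict. At that step we recolor $u_0v$ from $\beta=\gamma_0$ to $\gamma_1$, and all prior steps touched only edges of colors in $\{\gamma_1,\dots,\gamma_{\ell-1}\}$, all distinct from $\beta$. Hence the $\beta$-cherry at $v$ (which exists since $\beta$ is marked) is untouched, so $d_{\sigma_{\ell-1},\beta}(v)=2$ and the potential drops by at least $1+2-2=1$. In total, we obtain a strict decrease in at most $\ell\leq d(v)$ monotonic steps. The only subtle point is keeping track of the invariant that $\gamma_{\ell-j+1}$ stays missing at $v$, which rests on the pairwise distinctness of the $u_i$'s and of the colors along the path, ensuring that earlier recolorings cannot disturb either the invariant or the marked $\beta$-cherry.
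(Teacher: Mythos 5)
Your proof is correct, and it goes in the \emph{opposite direction} to the paper's. You shift the missing color from $\alpha$ backwards along the path towards $\beta$, while the paper moves the cherry forwards along the path from $\beta$ towards $\alpha$ (picking $\beta$ to minimize the distance to $\alpha$ and using \Cref{prop:movecherry} at each step, with a base case when the successor is $\alpha$). Both produce the same set of recolorings of edges $u_iv$ along the path, just in reverse order. Your version has two small advantages: it requires no minimality choice of $\beta$ (you simply take the given path), and the invariant you track, namely that the color currently being removed was missing at $v$ when the step began, lets you bound the potential change directly through \Cref{obs:potential counting} rather than re-invoking \Cref{prop:movecherry}, sidestepping the paper's implicit reliance on the intermediate colorings remaining cherry colorings (the paper's statement ``moving the mark from $\beta$ to $\gamma$'' tacitly assumes the newly formed $\gamma$-monochromatic component does not grow beyond a cherry, a case that would have to be dispatched separately via \Cref{lem:monosize2}). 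The only thing you leave implicit is the bound $\ell\leq d(v)$, but the paper has already observed that any path in $\cD_\sigma(v)$ has length at most $d(v)$.
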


\begin{proof}
Let $\beta$ be a marked color that minimises the distance to $\alpha$ in $\cD_\sigma(v)$. Fix a shortest path from $\beta$ to $\alpha$, let $\ell$ be its length, and $\gamma$ the successor of $\beta$ in it. As $(\beta,\gamma)$ is an arc of $\cD_{\sigma}(v)$, there is $u\in V(G)$ with $\sigma(uv)=\beta$ and $\gamma \notin L_{\sigma}(u)$. 

We prove the statement of the lemma by induction on $\ell \geq 1$. If $\ell=1$, then $\gamma=\alpha$ and since  $\alpha\notin L_{\sigma}(u)\cup L_{\sigma}(v)$, we can recolor $uv$ with $\alpha$, reducing the number of cherries, and thus the potential.
If $\ell\geq 2$, then $\gamma\in L_{\sigma}(v)$, but by minimality of $\beta$, $\gamma$ is not marked. Since $\gamma\notin L_{\sigma}(u)$, by \Cref{prop:movecherry}, we can monotonically recolor $uv$ with $\gamma$, moving the mark from $\beta$ to $\gamma$. Observe that the path between $\gamma$ and $\alpha$ in $\cD_\sigma(v)$ is preserved by this recoloring, as $\beta$ does not appear in it.
We now have a coloring $\sigma'$ such that,  in $\cD_{\sigma'}(v)$, $\gamma$ is marked, $\alpha$ has outdegree zero and there is a path of length $\ell-1$ between $\gamma$ and $\alpha$, so we conclude by induction.
\end{proof}

\begin{lemma}\label{lem:nocircuit}
Let $\sigma$ be a cherry $k$-coloring and $v \in V(G)$ a center of a $\beta$-cherry.
If  $\cD_{\sigma}(v)$ does not contain a cycle with a marked color, then we can monotonically recolor the graph to a coloring $\sigma'$ such that $\cD_{\sigma'}(v)$ contains a cycle with a marked color in at most $d(v)$ steps.
\end{lemma}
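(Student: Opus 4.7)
The plan is to trace a directed path in $\cD_\sigma(v)$ starting from the marked color $\beta$ and push the mark along it until it lands on a cycle. Starting from $\beta$ and iteratively following out-arcs, either one reaches a color of outdegree zero (in which case \Cref{lem:outdegree0} already yields a strict potential decrease in at most $d(v)$ steps, a stronger conclusion than we need), or one enters a directed cycle $C$ of $\cD_\sigma(v)$. We focus on the latter case.

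By hypothesis, $C$ contains no marked color. Fix a shortest directed walk $\beta = \gamma_0, \gamma_1, \dots, \gamma_t$ with $\gamma_t$ the first vertex on $C$; the colors $\gamma_0, \dots, \gamma_t$ are pairwise distinct, $\gamma_0, \dots, \gamma_{t-1} \notin V(C)$, and $t \leq d(v)$ since $\cD_\sigma(v)$ has at most $d(v)$ colors of positive outdegree. Let $j$ be the largest index for which $\gamma_j$ is marked (well-defined since $\gamma_0 = \beta$ is marked), so that $\gamma_{j+1}, \dots, \gamma_t$ are all unmarked. We will perform $t - j \leq d(v)$ monotonic recolorings, each pushing the mark one step along the path.

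At step $i \in \{j, j+1, \dots, t-1\}$, the arc $(\gamma_i, \gamma_{i+1})$ singles out an edge $vu_i$ with $\sigma(vu_i) = \gamma_i$ and $\gamma_{i+1} \notin L_\sigma(u_i)$, and we recolor $vu_i$ to $\gamma_{i+1}$. This is monotonic by \Cref{prop:movecherry}, since $v$ is not the center of a $\gamma_{i+1}$-cherry in the current coloring: $\gamma_{i+1}$ was initially unmarked by the choice of $j$, and the preceding steps only mark $\gamma_{j+1}, \dots, \gamma_i$, none of which equals $\gamma_{i+1}$. In fact the potential is preserved: $\gamma_i$ goes from two incidences at $v$ down to one, while $\gamma_{i+1} \in L_\sigma(v)$ (because $\gamma_{i+1}$ has positive outdegree in $\cD_\sigma(v)$) goes from one incidence at $v$ up to two, so one cherry is broken and another created. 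Thus the mark transfers from $\gamma_i$ to $\gamma_{i+1}$, and after step $t-1$ the color $\gamma_t \in C$ is marked.

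The main technical hurdle is the bookkeeping: verifying that the arc $(\gamma_i, \gamma_{i+1})$ is still present in the current digraph when step $i$ runs, and that the cycle $C$ survives in the final digraph. Both follow from the pairwise distinctness of $\gamma_0, \dots, \gamma_t$ together with $\gamma_0, \dots, \gamma_{t-1} \notin V(C)$: the edges $vu_j, \dots, vu_{t-1}$ are pairwise distinct (else the same edge would carry two different colors originally), each recoloring changes $L$ only at the single vertex $u_\ell$ and only by swapping $\gamma_\ell$ for $\gamma_{\ell+1}$, neither of which equals $\gamma_{i+1}$ or any color on $C$, and no edge recolored during the procedure carries a color on $C$, so every arc of $C$ is preserved. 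This yields the desired coloring $\sigma'$ in at most $d(v)$ steps.
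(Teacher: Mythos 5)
Your argument is essentially the paper's: decompose the out-walk from $\beta$ into a ``lasso'' (a stem ending at the first vertex of a cycle $\C$), locate the last marked color $\gamma_j$ on the stem, and push the mark one arc at a time along the stem until it lands on $\C$, using \Cref{prop:movecherry} at each step and verifying that the relevant arcs of the digraph survive because the stem colors are distinct and disjoint from $V(\C)$. You are somewhat more explicit than the paper about the bookkeeping (distinctness of the recolored edges, invariance of $L$ outside $\{u_j,\dots,u_{t-1}\}$, survival of $\C$), and you note the out-degree-zero branch up front, where the paper's proof dismisses it in one clause; these are stylistic rather than substantive differences.

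One shared subtlety worth flagging, since it lives in the paper's proof too: neither argument checks that the coloring remains a \emph{cherry} coloring after each push. When $vu_i$ is recolored to $\gamma_{i+1}$, the resulting $\gamma_{i+1}$-component could acquire a third edge if $u_{i+1}$ was itself the center of a $\gamma_{i+1}$-cherry in $\sigma$, in which case $\gamma_{i+1}$ does not become ``marked'' (there is no cherry centered at $v$) and the hypothesis of \Cref{prop:movecherry} fails at the next step. This does not damage \Cref{thm:det}, since a monochromatic component of size $\geq 3$ immediately yields a strict potential decrease via \Cref{lem:monosize2}, but strictly speaking the conclusion of the lemma should read ``either the potential strictly decreases or $\cD_{\sigma'}(v)$ contains a cycle with a marked color,'' as is done explicitly in \Cref{cl:pathbi}. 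Your write-up, like the paper's, would benefit from making this fallback explicit.
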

\begin{proof}
Let $P$ be a longest path from $\beta$ in $\cD_{\sigma}(v)$.
By \Cref{lem:outdegree0}, $P$ has no vertex of degree $0$.
By hypothesis, $P$ does not form a cycle. 
Hence, $P$ can be separated to a path $P_0$ ending on a color $\alpha$ and a cycle $\C$ containing $\alpha$.
Let $\gamma$ be the last marked color of $P_0$ and $\ell$ the distance from $\gamma$ to $\alpha$ along $P_0$. 

As in \Cref{lem:outdegree0}, we prove the statement of the lemma by induction on $\ell \ge 1$ by "moving" the cherry along $P_0$ until it reaches $\alpha$. 
Hence, we obtain a coloring $\sigma'$ such that $\alpha$ is a marked color in the cycle $\C$ in $\cD_{\sigma'}(v)$ in $d(v)$ steps.
\end{proof}

A \emph{chord} of a cycle $C$ in a digraph $\cD$ is an arc from a vertex $u$ of $\C$ to a vertex of $\C$ which is not the succesor of $u$ in $C$.  
A cycle $\C$ with no chords is called \emph{chordless}.

\begin{lemma}\label{lem:circuitothers}
Let $\sigma$ be a cherry $k$-coloring and $v\in V(G)$. Suppose that $\C$ is a chordless cycle in $\cD_{\sigma}(v)$ with at least one marked color $\beta$. Then we can monotonically recolor $\sigma$ into a coloring $\sigma'$ such that $\C$ is a cycle of $\cD_{\sigma'}(v)$ and $\beta$ is the only marked color in it. This can be done in at most $d(v)$ steps.
\end{lemma}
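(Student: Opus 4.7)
The plan is to proceed by induction on the number $m$ of marked colors in $\C$ distinct from $\beta$. The base case $m = 0$ is trivial: take $\sigma' = \sigma$. For the inductive step, it suffices to exhibit a short sequence of monotonic recolorings that eliminates one non-$\beta$ mark from $\C$ while preserving all arcs of $\C$ and creating no new marked color in $\C$; invoking the induction hypothesis on the resulting coloring then finishes the argument, and the step count will be controlled to fit within the $d(v)$ bound.

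Write $\C = \alpha_1 \to \alpha_2 \to \cdots \to \alpha_p \to \alpha_1$ with $\beta = \alpha_1$, and pick a marked color $\gamma = \alpha_j$ with $j \neq 1$. The main mechanism is cherry shifting in the spirit of \Cref{lem:outdegree0} and \Cref{lem:nocircuit}: at the current marked position $\alpha_m$ on $\C$, if its successor $\alpha_{m+1}$ is unmarked, then by \Cref{prop:movecherry} we may monotonically recolor one of the two $\alpha_m$-cherry edges at $v$ into $\alpha_{m+1}$, which transfers the mark from $\alpha_m$ to $\alpha_{m+1}$. The arc $(\alpha_m, \alpha_{m+1})$ of $\C$ is maintained by the remaining $\alpha_m$-edge whenever possible (namely, when both cherry endpoints miss $\alpha_{m+1}$, we choose the edge whose removal leaves the other endpoint still witnessing the arc). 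Crucially, since $\C$ is chordless, the only arc of $\C$ that this operation could possibly disturb is $(\alpha_m, \alpha_{m+1})$ itself; all other arcs of $\C$ live on edges and neighbors of $v$ that are untouched by the local recoloring. I would push the $\gamma$-mark forward along $\C$ through consecutive unmarked colors, stopping as soon as the mark can be cleanly destroyed.

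For the destruction step, whenever pushing would break the arc $(\alpha_m, \alpha_{m+1})$ or the mark is blocked by another marked color, I would instead recolor a cherry edge of $\alpha_m$ into a color off $\C$. By \Cref{prop:numbermissing}, $v$ has at least $m+1 \geq 2$ missing colors (since $v$ centers at least the $\beta$- and $\gamma$-cherries), and these missing colors are automatically not in $L_\sigma(v)$ hence not on $\C$ (as every color of $\C$ has positive out-degree in $\cD_\sigma(v)$ and so lies in $L_\sigma(v)$). An off-$\C$ color provided by \Cref{prop:movecherry} destroys the shifted $\alpha_m$-cherry without inserting a new mark on $\C$ and without touching arcs of $\C$ not incident to $\alpha_m$; together with preservation of $(\alpha_m, \alpha_{m+1})$ through the untouched cherry edge (which can be arranged by the same two-endpoint choice), this cleanly removes one non-$\beta$ mark from $\C$. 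The main obstacle is precisely this arc-preservation bookkeeping during the push and at the final destruction step, which is where the chordless hypothesis is used in an essential way. Since each push uses one monotonic recoloring and advances the mark one step along $\C$, and the destruction step adds at most one more, the total number of recolorings across the entire induction is bounded by the length of $\C$ together with one ``off-$\C$'' move per eliminated mark, which is at most $d(v)$.
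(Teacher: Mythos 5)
Your proposal departs from the paper's proof, which is simpler and more direct: the paper never pushes marks along $\C$. For each marked color $\alpha\neq\beta$ with cherry $(u,v,w)$ and successor $\gamma$ in $\C$, the paper picks the cherry edge whose far endpoint misses $\gamma$ (say $uv$), \emph{keeps it untouched} to preserve the arc $(\alpha,\gamma)$, and applies \Cref{lem:exchangecherry} to recolor the \emph{other} edge $vw$ into some $\delta\notin L_\sigma(v)\cap L_\sigma(w)$ with $\delta\neq\alpha,\gamma$. The chordless hypothesis is then used precisely to show $\delta\notin V(\C)$: if $\delta\notin L_\sigma(v)$ it has out-degree $0$ so is off $\C$, and if $\delta\notin L_\sigma(w)$ then $(\alpha,\delta)$ would be a chord. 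Each step kills one non-$\beta$ mark outright, so at most $d(v)$ steps suffice and no pushing phase is needed.

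There is a concrete gap in your destruction step. You pick $\delta$ among the colors missing at $v$ (so $\delta\notin L_\sigma(v)$, automatically off $\C$), and then invoke \Cref{prop:movecherry}. But \Cref{prop:movecherry}, in the instantiation that requires $\delta\notin L_\sigma(v)$, additionally requires that $w$ is \emph{not} the center of a $\delta$-cherry — and nothing in \Cref{prop:numbermissing} guarantees that any of the missing colors at $v$ satisfies this. It is entirely possible that every color missing around $v$ is already a cherry color at $w$. The paper's \Cref{lem:exchangecherry} exists exactly to handle this case: it falls back to a counting argument around $w$ which yields a $\delta$ missing at $w$ (not $v$). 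But then $\delta$ may lie in $L_\sigma(v)$, so it is no longer automatic that $\delta$ is off $\C$, and you need the chordlessness of $\C$ to rule that out — a use of chordlessness you do not make. In short: you use chordlessness only for arc preservation, but the paper also uses it to keep the newly created mark off $\C$, and you cannot avoid this second use by restricting to colors missing at $v$. Once the destruction step is fixed via \Cref{lem:exchangecherry} plus the chordless argument, the entire pushing machinery becomes unnecessary (and its termination and arc-bookkeeping, which you flag yourself as the main obstacle, become moot).
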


\begin{proof}
Suppose $\alpha \neq \beta$ is a marked color in $\C$ and let $c = (u,v,w)$ be the $\alpha$-cherry around $v$. 
Let $\gamma$ be the successor of $\alpha$ in $\C$.
We have that $\gamma \in L_{\sigma}(v)$ and we may assume, by symmetry, that $\gamma\notin L_{\sigma}(u)$. We will not recolor $uv$ to ensure that $(\alpha,\gamma)$ is an arc of the color-shift digraph of the final coloring.

By \Cref{lem:exchangecherry}, we can monotonically recolor $vw$ with a color $\delta\neq \alpha,\gamma$ such that $\delta\notin L_{\sigma}(v)\cap L_{\sigma}(w)$. This deletes $c$, but may create a new $\delta$-cherry.
Let us argue that $\delta\notin V(\C)$. If $\delta\notin L_{\sigma}(v)$, then $\delta$ has out-degree $0$ in $\cD_{\sigma}(v)$ and cannot be in $\C$.  If $\delta\notin L_{\sigma}(w)$, then $(\alpha,\delta)$ is an arc of $\cD_{\sigma}(v)$, but since $\delta$ is not $\gamma$, the color is not in $\C$ (otherwise, $(\alpha, \delta)$ would be a chord of $C$).

Therefore, $\C$ is preserved in the new coloring, but the number of marks in it has decreased, as $\alpha$ is no longer marked and $\delta$ is not in $\C$. We may repeat this procedure until $\beta$ is the only marked color in $\C$.
\end{proof}

\begin{lemma}\label{lem:circuitone}
Let $\sigma$ be a cherry $k$-coloring and $v\in V(G)$.
If there exists a cycle in $\cD_\sigma(v)$ with exactly one marked color, then we can decrease the potential in $\Oh(n)$ monotonic steps. 
\end{lemma}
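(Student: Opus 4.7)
Let the cycle in $\cD_\sigma(v)$ be $\beta = \alpha_0 \to \alpha_1 \to \cdots \to \alpha_{\ell-1} \to \alpha_0$, with $\beta$ the unique marked color. For each $i$, fix a neighbor $u_i$ of $v$ witnessing the arc $(\alpha_i, \alpha_{(i+1) \bmod \ell})$, i.e. $\sigma(u_i v) = \alpha_i$ and $\alpha_{(i+1) \bmod \ell} \notin L_\sigma(u_i)$. Since $\alpha_1, \ldots, \alpha_{\ell-1}$ are unmarked, the vertices $u_1, \ldots, u_{\ell-1}$ are uniquely determined; the witness $u_0$ is one of the two $\beta$-neighbors of $v$, and I let $u'$ denote the other. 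The plan is to \emph{shift} the cherry monotonically around the cycle and then exit the shift at an index where a previously proved lemma forces a strict potential drop.

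First, I would handle an easy case using \Cref{lem:outdegree0}. If some color $\gamma \notin L_\sigma(v)$ is also absent from some $L_\sigma(u_i)$, then $\gamma$ has outdegree $0$ in $\cD_\sigma(v)$ and there is an arc in the cycle (namely $(\alpha_i, \gamma)$ together with the cycle-path from $\beta$ to $\alpha_i$) leading from the marked color $\beta$ to $\gamma$; \Cref{lem:outdegree0} then decreases the potential in at most $d(v) = O(n)$ steps. So I may assume the "hard case" that $\overline{L_\sigma(v)} \subseteq L_\sigma(u_i)$ for every $i \in \{0, 1, \ldots, \ell-1\}$ and also for $u'$.

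Next, I would shift the cherry along the cycle. Start by recoloring $u_0 v$ from $\beta$ to $\alpha_1$: by \Cref{obs:potential counting} the change is at least $1 + 2 - 0 - 1 - 2 = 0$, so this is monotonic, and it replaces the $\beta$-cherry at $v$ by an $\alpha_1$-cherry $(u_0, v, u_1)$. Then, inductively, in the current coloring $\sigma_i$ with an $\alpha_i$-cherry centered at $v$, apply \Cref{prop:movecherry} to recolor $u_i v$ from $\alpha_i$ to $\alpha_{(i+1) \bmod \ell}$; this is valid because $\alpha_{i+1} \notin L_\sigma(u_i) = L_{\sigma_i}(u_i)$ and, crucially, $\alpha_{i+1}$ is unmarked in the cycle so $v$ is not a center of an $\alpha_{i+1}$-cherry in $\sigma_i$.

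Finally, I need to exit the shift with a strict decrease. The key structural observation is that after shift step $i-1$, the vertex $u_{i-1}$ has $\alpha_i \in L_{\sigma_i}(u_{i-1})$ only via the edge $u_{i-1} v$ (since originally $\alpha_i \notin L_\sigma(u_{i-1})$ by the cycle arc); thus $u_{i-1}$ has degree exactly $1$ in the $\alpha_i$-subgraph of $\sigma_i$. If there exists a color $\gamma \notin L_\sigma(v) \cup L_\sigma(u_{i-1})$, then $u_{i-1}$ has degree $1$ in the $(\alpha_i, \gamma)$-component of $\sigma_i$ containing the current cherry, and \Cref{lem:bichromaticcycle} yields a strict potential drop in $O(n)$ further steps.

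The main obstacle will be to show that some exit index $i$ always works — equivalently, to rule out that $L_\sigma(u_{i-1}) \cup L_\sigma(v) = [k]$ for every $i$ along the shift. I would argue this via a counting/pigeonhole argument, passing if needed to a chordless sub-cycle of $\cD_\sigma(v)$ (possible because any cycle in a digraph contains a chordless sub-cycle, and if the sub-cycle still has $\beta$ marked and no other marked colors we are done; otherwise we can apply \Cref{lem:circuitothers} to clean it). The chordlessness forces $L_\sigma(u_i) \supseteq \{\alpha_0, \ldots, \alpha_{\ell-1}\} \setminus \{\alpha_{i+1}\}$, which, combined with the "hard case" inclusion $\overline{L_\sigma(v)} \subseteq L_\sigma(u_i)$ and the bound $|L_\sigma(u_i)| \le \Delta = k-1$, yields a contradiction for $\ell$ large enough, and otherwise leaves enough room for at least one index $i$ to satisfy the exit condition. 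Putting everything together, the total number of monotonic steps is $(\ell - 1) + O(n) = O(n)$ since $\ell \le d(v) \le n$.
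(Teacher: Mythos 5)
Your proposal diverges from the paper's argument and, as written, has a genuine gap at the exit step.

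Your plan is to shift the cherry along a chordless cycle $\beta=\alpha_0 \to \alpha_1\to\cdots\to\alpha_{\ell-1}\to\alpha_0$ of $\cD_\sigma(v)$ and then stop at some index $i$ where $u_{i-1}$ becomes a degree-one vertex of a bichromatic component, so that \Cref{lem:bichromaticcycle} applies. For this you need a color $\gamma\notin L_\sigma(v)\cup L_\sigma(u_{i-1})$. But notice that your ``hard case'' is precisely the assumption $\overline{L_\sigma(v)}\subseteq L_\sigma(u_i)$ \emph{for every} $i$, which is equivalent to $L_\sigma(v)\cup L_\sigma(u_i)=[k]$ for every $i$. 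So in the hard case the exit condition fails \emph{simultaneously} for all indices; it is not a matter of choosing the right $i$. Your pigeonhole sketch does not repair this: chordlessness plus the hard-case inclusion only gives $|L_\sigma(u_i)|\ge |\overline{L_\sigma(v)}| + (\ell-1) \ge \ell+1$, which contradicts $|L_\sigma(u_i)|\le \Delta = k-1$ only when $\ell\ge\Delta$; for a short cycle (e.g.\ $\ell=2$) there is plenty of room and no contradiction. And the hard case genuinely occurs: one can set up a cherry coloring with a $2$-cycle in $\cD_\sigma(v)$ and all colors missing at $v$ present at each $u_i$, so the phrase ``otherwise leaves enough room for at least one index'' is simply not justified.

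The paper sidesteps this entirely by a different structural observation. It fixes a color $\alpha\notin L_\sigma(v)$ from the start and looks at the $(\alpha,\beta)$-component $H$ of the original $\beta$-cherry $(u,v,w)$. If $H$ has a degree-one vertex, \Cref{lem:bichromaticcycle} applies immediately. Otherwise (min degree $\ge 2$), the paper shifts the cherry around the cycle until the marked color returns to $\beta$, producing a $\beta$-cherry $(u,v,x)$ with $x\neq w$ (because the digraph arc into $\beta$ forces $\beta\notin L_\sigma(x)$). Since the only bichromatic edge touched at $w$ was $vw$, which has now left the component, the degree of $w$ in the new $(\alpha,\beta)$-component has dropped by one, to $1$ or $2$; in the latter case $w$ is now the center of an $\alpha$-cherry. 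Either \Cref{lem:bichromaticcycle} or \Cref{coro:bichromaticcycle} then finishes. The crucial idea you are missing is therefore not a count over $L_\sigma(u_i)$, but the degree drop at $w$ in the bichromatic component after the mark has traveled all the way around the cycle.

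Your easy case reduction to \Cref{lem:outdegree0} is fine, and the monotone-shift mechanics along the cycle are correct, but the argument does not close without the degree-drop observation (or some substitute for it).
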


\begin{proof} \Cref{fig:movearound} illustrates this proof.
Let $\C$ be the cycle, and $\beta$ the marked color.
Let $c = (u,v,w)$ be the $\beta$-cherry centered at $v$. 
Since $k\geq \Delta+1$, there is a color $\alpha$ missing around $v$.
If the $(\beta, \alpha)$-component $H$ containing $c$ has a vertex of degree $1$, we can conclude by \Cref{lem:bichromaticcycle}.
Hence, assume that its minimum degree is at least $2$. 

We will ``move'' the cherry centered at $v$ around $\C$. Let $\gamma$ be the successor of $\beta$ in $\C$.
We may assume, by symmetry, that $\gamma\notin L_{\sigma}(w)$. As there is no $\gamma$-cherry around $v$, we can monotonically recolor $vw$ with $\gamma$ to obtain a new coloring such that $\gamma$ is the only marked color in $\C$; this can be seen as passing the mark from $\beta$ to $\gamma$.
We repeat this process at most $d(v) = \Oh(n)$ times until the first time we obtain a $\beta$-cherry again; this will happen as $\C$ is a cycle. Let $\sigma'$ be the new coloring and $c' = (u,v,x)$ the new $\beta$-cherry. 
By construction of $\cD_{\sigma}(v)$, we have that $\beta \notin L_\sigma(x)$, hence $x$ is different from $w$ since $\sigma(uw) = \beta$.
As we only recolor edges incident with $v$, the $(\alpha, \beta)$-component $H'$ of $\sigma'$ containing $c'$ differs from $H$ in only two edges: $vw$ has been replaced by $vx$. In particular, $w$ still belongs to the same $(\alpha, \beta)$-component as $u$, since $w$ has degree at least two in the connected component $H$ and we removed only one of its incident edges. In $H'$, either $w$ has degree $1$, or it has degree $2$ and is the center of an $\alpha$-cherry. We can use \Cref{lem:bichromaticcycle} and \Cref{coro:bichromaticcycle} respectively to conclude.
\end{proof}

\begin{proof}[Proof of \Cref{thm:det}.]
First we repeatedly use \Cref{lem:monosize2} to get a cherry coloring in $O(n\Delta^2)$ steps. This cherry coloring as potential $O(n\Delta)$.
Then using the four previous lemmas, there exists a monotonic reconfiguration sequence from any cherry coloring $\sigma$ that decreases the potential. Indeed, consider a marked color $\beta$ in $\cD_\sigma(v)$. 
If there is a path from $\beta$ to a color of out-degree $0$, we can decrease the potential in at most $d(v)$ steps by \Cref{lem:outdegree0}.
Otherwise, by \Cref{lem:nocircuit}, we can recolor the graph such that $\cD_\sigma(v)$ contains a cycle with a marked color in at most $d(v)$ steps.
Denote by $\C$ a chordless cycle that has at least one marked vertex.
By \Cref{lem:circuitothers}, we can remove the marks of all the other colors in $\C$ and then, by \Cref{lem:circuitone}, decrease the potential in $\Oh(n)$ steps by unit decrease. This concludes the proof.
\end{proof}

\begin{figure}[h]
       
        \tikzstyle{vertex}=[circle,draw, minimum size=7pt, scale=0.7, inner sep=1pt, fill = black]
        \tikzstyle{v}=[circle,draw, minimum size=12pt, scale=0.9, inner sep=2pt]
        
        \tikzstyle{v2}=[circle,draw, minimum size=12pt, scale=1.1, inner sep=2pt]
        \tikzstyle{vc}=[draw, minimum size=12pt, scale=1.2, inner sep=3pt]
        \tikzstyle{fleche}=[->,>=latex]
        \tikzstyle{garrow}=[arrows = {-Latex[width=4pt, length=4pt]}]
        
        \tikzstyle{garrow2}=[arrows = {-Latex[width=6pt, length=6pt]}]
        
        \tikzstyle{garrow3}=[arrows = {-Latex[width=10pt, length=10pt]}]
        \tikzstyle{labell}=[opacity=0,text opacity=1, scale =1]
        
        \centering
            \begin{tikzpicture}[scale=1.1]
    
            \node (a0) at (0,0) [v, \colore, label={160:$v$}, scale=1.15] {$\alpha$};
    
            \node (a4) at (-45:1.5cm) [v, \colorf] {$\zeta$};
            \node (a5) at (-90:1.5cm) [v, \colord, label={180:$w$}, scale=1.15] {$\gamma$};
            \node (a3) at (0:1.5cm) [v, \colorg, label={20:$x$}] {$\beta$};
            \node (a9) at (-135:1.5cm) [vertex, label={180:$u$}] {};

            \node (b1) at (-100:2.5cm) [vertex] {};
            \node (b2) at (-125:2.5cm) [vertex] {};
            
            \draw[line width = 1.9, \colore] (a5) to (b1);
                \draw[dashone, line width = 1.9, draw=\colorg, postaction={draw=\colore, dash phase= \dash}] (b1) to[bend left=20] (b2);
                
            \draw[line width = 1.9, \colore] (b2) to (a9);

            \draw[very thick, dashed, opacity=.6] (a0) to (180:.6cm);
            \draw[very thick, \colord] (a0) to (a4);
            \draw[line width = 1.9, \colorg] (a0) to (a5);
            \draw[very thick, \colorf] (a0) to (a3);
            \draw[line width = 1.9, \colorg] (a0) to (a9);

            \draw[line width=1.8, garrow2, opacity=1, \colorc] (-80:2cm) arc (-80:-10:2cm);

            \node (h1) at (-60:.9cm) [labell, \colord] {$\gamma$};
            \node (h2) at (-105:.9cm) [labell, \colorg] {$\beta$};
            \node (h3) at (-15:.9cm) [labell, \colorf] {$\zeta$};
            \node (h9) at (-150:.9cm) [labell, \colorg] {$\beta$};

            \node (h0) at (-1,-3.1) [labell, scale=1] {$\sigma$};
            
            \node (h0) at (2.5,-2) [labell, scale=1] {$\cD_{\sigma}(v)$};

            \node (i0) at (-1.5,-5) [labell] {};
            \node (i2) at (3,0) [labell] {};
            \draw[thick, opacity=0.3] (i0) to (i2);

            \tikzset{xshift=1.3cm, yshift=-4cm}

                \node (a1) at (-1,-0.5) [vc, \colorg, fill=\colorg!10] {$\beta$};
                \node (a3) at (1,-0.5) [v2, \colorf] {$\zeta$};
                \node (a2) at (0.3,1.1) [v2, \colord, scale=1.15] {$\gamma$};
        
                \draw[line width=1.8, garrow2, opacity=1, \colorc] plot [smooth] coordinates {(-.4,-.1) (0.2, 0.6) (0.6, -.2) (-0.2,-.2)};
                \draw[very thick, garrow] (a1) to (a2);
                \draw[very thick, garrow] (a2) to (a3);
                \draw[very thick, garrow] (a3) to (a1);
                
                \draw[very thick, dashed] (-1.4,-0.5) to (-1.8,-0.5);
                \draw[very thick] (-1.4,-0.5) to (a1);

            \tikzset{xshift=-1.3cm, yshift=+4cm}

            \draw[line width=1.8, garrow3] (3.8, -2.5) to (5.7, -2.5);
            
            \tikzset{xshift=8cm}

            \node (a0) at (0,0) [v, \colore, label={160:$v$}, scale=1.15] {$\alpha$};
    
            \node (a4) at (-45:1.5cm) [v, \colord, scale=1.15] {$\gamma$};
            \node (a5) at (-90:1.5cm) [v, \colorg, label={180:$w$}] {$\beta$};
            \node (a3) at (0:1.5cm) [v, \colorf, label={20:$x$}] {$\zeta$};
            \node (a9) at (-135:1.5cm) [vertex, label={160:$u$}] {};

            \draw[very thick, dashed, opacity=.6] (a0) to (180:.6cm);
            \draw[very thick, \colorf] (a0) to (a4);
            \draw[very thick, \colord] (a0) to (a5);
            \draw[line width = 1.9, \colorg] (a0) to (a3);
            \draw[line width = 1.9, \colorg] (a0) to (a9);

            \node (h1) at (-60:.9cm) [labell, \colorf] {$\zeta$};
            \node (h2) at (-105:.9cm) [labell, \colord] {$\gamma$};
            \node (h3) at (-15:.9cm) [labell, \colorg] {$\beta$};
            \node (h9) at (-150:.9cm) [labell, \colorg] {$\beta$};

              \node (b1) at (-100:2.5cm) [vertex] {};
            \node (b2) at (-125:2.5cm) [vertex] {};
            
            \draw[line width = 1.9, \colore] (a5) to (b1);
                \draw[dashone, line width = 1.9, draw=\colorg, postaction={draw=\colore, dash phase= \dash}] (b1) to[bend left=20] (b2);
                
            \draw[line width = 1.9, \colore] (b2) to (a9);

            \node (h0) at (-1,-3.1) [labell, scale=1] {$\sigma'$};
            \node (h0) at (2.5,-2) [labell, scale=1] {$\cD_{\sigma'}(v)$};

            \node (i0) at (-1.5,-5) [labell] {};
            \node (i2) at (3,0) [labell] {};
            \draw[thick, opacity=0.3] (i0) to (i2);

            \tikzset{xshift=1.3cm, yshift=-4cm}

                \node (a1) at (-1,-0.5) [vc, \colorg, fill=\colorg!10] {$\beta$};
                \node (a3) at (1,-0.5) [v2, \colorf] {$\zeta$};
                \node (a2) at (0.3,1.1) [v2, \colord, scale=1.1] {$\gamma$};
        
                \draw[very thick, garrow] (a2) to (a1);
                \draw[very thick, garrow] (a3) to (a2);
                
                \draw[very thick, dashed] (-1.4,-0.5) to (-1.8,-0.5);
                \draw[very thick] (-1.4,-0.5) to (a1);
                \draw[very thick, garrow] (a1) to (a3);

            \tikzset{xshift=-1.3cm, yshift=+4cm}

            \end{tikzpicture}

        \caption{Illustration of \Cref{lem:circuitone}. The color $\beta$ is the only marked color in the directed cycle $\C = (\beta, \gamma, \zeta)$. Around $v$, we only represent the color in $\C$. We recolor the edges incident to $v$ from the rightmost edge colored $\beta$, then follow the cycle in the color shift digraph around $v$.  }
        \label{fig:movearound}
\end{figure}
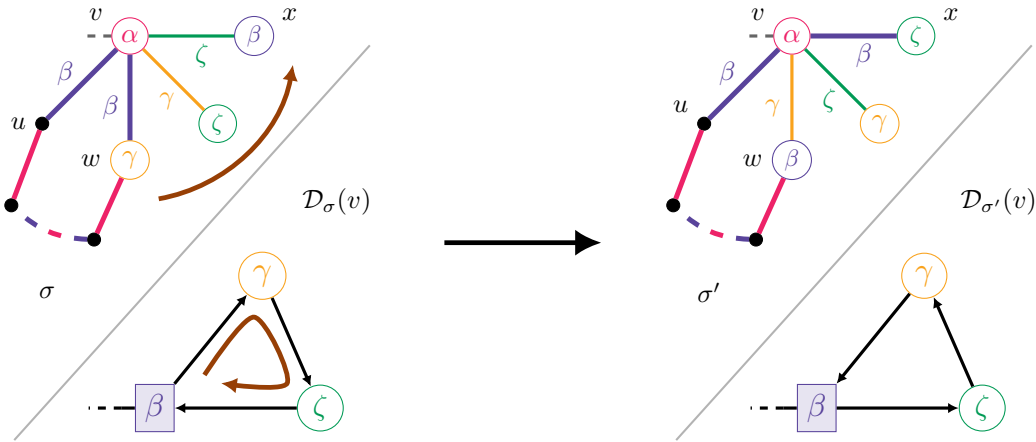

\section{Concluding remarks}\label{sec:concl}
In this paper, we take a first step toward efficient sampling of proper edge-colorings with an optimal number of colors by analyzing the randomized hill-climbing algorithm of Dotan, Linial, and Peled~\cite{Dotan20} on arbitrary graphs.
\smallskip

\noindent\textbf{Efficiency of the algorithm.} A straightforward analysis shows that the expected running time of the algorithm is $\exp(O(n))$. However, simulations suggest that the expected running time is polynomial, aligning with the $\tilde{O}(n^4)$ conjecture in~\cite{Dotan20} for even complete graphs. In fact, the algorithm only takes exponential time during a specific step; see \Cref{lem:bichromaticcycle}. 
\smallskip

\noindent\textbf{Uniformity of the target distribution.} For even complete graphs, it is conjectured that starting from a coloring obtained by assigning colors to edges independently and uniformly at random, the algorithm converges to the uniform distribution over all proper edge-colorings. Extending this result to general graphs is a much broader challenge, and we do not venture to propose such a conjecture at this stage.
\smallskip

\noindent\textbf{Class I graphs.} While the algorithm is analyzed for an optimal number of colors in Class II graphs, it is natural to ask about its behavior in Class I graphs when restricted to only $\Delta$ colors. In this setting, efficient sampling in general is impossible, but the question of whether the algorithm terminates almost surely is still valid. The case of even complete graphs is  a specific instance of this question. 
\smallskip

\noindent\textbf{Design theory.} We also highlight the related problem of efficiently sampling combinatorial designs, such as Latin squares, 1-factorizations of complete graphs, or Steiner triple systems. Despite the growing interest in random combinatorial designs in recent years, efficient algorithms to generate them remain elusive.
\smallskip

\noindent\textbf{Beyond Glauber dynamics.} Glauber dynamics may fail to be ergodic (i.e., the recoloring graph is not connected) when fewer than $2\Delta$ colors are used. The challenge of designing efficient approximate sampling algorithms below this threshold remains a wide-open problem.

\bigskip

\textbf{Acknowledgement.} The authors would like to thank the organisers of the Villeurbanne Reconfiguration Workshop at the Université Claude Bernard Lyon 1 for their warm hospitality, where this project got started.

\bibliographystyle{plain}
\bibliography{biblio}

\end{document}